\theoremstyle{remark}
\newtheorem{remark}{Remark}
\newtheorem*{notation}{Notation}
\algnewcommand\algorithmicswitch{\textbf{match}}
\algnewcommand\algorithmiccase{\textbf{with}}
\algnewcommand\algorithmicassert{\texttt{assert}}
\algnewcommand\Assert[1]{\State \algorithmicassert(#1)}%
\newcommand{\pfun}{\mathfrak{p}}
\newcommand{\Perturbations}{\mathtt{P}}
\newcommand{\psem}[1]{\langle\!\langle#1\rangle\!\rangle}
\newcommand{\logicName}{Robustness Temporal Logic}
\newcommand{\logicShort}{RobTL}
\newcommand{\cinterval}{\mathtt{CI}}
\newcommand{\logicSymbol}{\mathtt{L}}
\newcommand{\system}{\mathbf{s}}
\newcommand{\esp}{\mathtt{exp}}
\newcommand{\espClass}{\mathtt{Exp}}
\newcommand{\sx}[1]{<^{#1}}
\newcommand{\dx}[1]{>^{#1}}
\newcommand{\join}{\,\mathtt{max}\,}
\newcommand{\meet}{\,\mathtt{min}\,}
\newcommand{\eventually}[1]{\mathtt{E}^{#1}\,}
\newcommand{\always}[1]{\mathtt{A}^{#1}\,}
\newcommand{\until}[1]{\,\mathtt{U}^{#1}\,}
\newcommand{\cstep}{\mathsf{step}}
\newcommand{\fevent}[2]{\Diamond^{#1} #2}
\newcommand{\fglob}[2]{\Box^{#1} #2}
\newcommand{\funtil}[3]{#1~\mathcal{U}^{#2}~#3}
\newcommand{\nats}{\mathbb{N}}
\newcommand{\borel}{{\mathcal B}}
\newcommand{\reals}{\mathbb{R}}
\newcommand{\rv}{\textbf{X}}
\newcommand{\sat}[3]{\llbracket #1 \rrbracket_{#2}^{#3}}
\newcommand{\D}{\mathcal{D}}
\newcommand{\ds}{\mathbf{d}}
\newcommand{\mesA}{\mathbb{A}}
\newcommand{\mesB}{\mathbb{B}}
\newcommand{\mesD}{\mathbb{D}}
\newcommand{\distrib}{\Pi}
\newcommand{\Var}{\mathcal{V}}
\DeclareMathOperator{\Wasserstein}{\mathbf{W}}
\newcommand{\wact}[2]{(#1\rightarrow #2)}
\newcommand{\dirac}{\delta}
\newcommand{\traccione}{evolution sequence}
\newcommand{\tracciones}{evolution sequences}
\newcommand{\Traccione}{Evolution sequence}
\newcommand{\Tracciones}{Evolution sequences}
\newcommand{\dataspace}{data space}
\newcommand{\Dataspace}{Data space}
\newcommand{\datastate}{data state}
\newcommand{\datastates}{data states}
\newcommand{\Datastate}{Data state}
\newcommand{\ES}{\mathcal{S}}
\newcommand{\dd}{\,\mathfrak{d}\,}
\newcommand{\F}{\Sigma}
\newcommand{\f}{\mathtt{f}}
\newcommand{\h}{\mathfrak{h}}
\newcommand{\p}{\mathtt{p}}
\newcommand{\W}{\mathfrak{W}}
\newcommand{\w}{\mathfrak{w}}
\newcommand{\nat}{\mathbb{N}} 
\newcommand{\real}{\mathbb{R}} 
\newcommand{\nil}{\mathrm{nil}}
  \providecommand\BibTeX{{%
    \normalfont B\kern-0.5em{\scshape i\kern-0.25em b}\kern-0.8em\TeX}}}
\begin{document}

\title{RobTL: A Temporal Logic for the Robustness of Cyber-Physical Systems}

\author{Valentina Castiglioni}
\orcid{0000-0002-8112-6523}
\affiliation{
  \institution{Reykjavik University}
  \city{Reykjavik}
  \country{Iceland}
}
\email{valentinac@ru.is}

\author{Michele Loreti}
\orcid{0000-0003-3061-863X}
\affiliation{
  \institution{University of Camerino}
  \city{Camerino}
  \country{Italy}
}
\email{michele.loreti@unicam.it}

\author{Simone Tini}
\orcid{0000-0002-3991-5123}
\affiliation{
  \institution{University of Insubria}
  \city{Como}
  \country{Italy}
}
\email{simone.tini@uninsubria.it}

\begin{abstract}
We propose the \emph{Robustness Temporal Logic} (\emph{RobTL}), a novel temporal logic for the specification and analysis of \emph{distances} between the behaviours of Cyber-Physical Systems (CPSs) over a finite time horizon.
Differently from classical temporal logic expressing properties on the behaviour of a system, we can use RobTL specifications to \emph{measure the differences} in the behaviours of systems with respect to various objectives and temporal constraints, and to \emph{study} how those differences \emph{evolve in time}.
Since the behaviour of CPSs is inevitably subject to uncertainties and approximations, we show how the unique features of RobTL allow us to specify property of \emph{robustness} of systems \emph{against perturbations}, i.e., their capability to function correctly even under the effect of perturbations.
Given the probabilistic nature of CPSs, our \emph{model checking algorithm} for RobTL specifications is based on \emph{statistical inference}. 
As an example of an application of our framework, we consider a supervised, self-coordinating engine system that is subject to attacks aimed at inflicting overstress of equipment.
\end{abstract}

\ccsdesc[500]{Theory of computation~Verification by model checking}
\ccsdesc[300]{Theory of computation~Modal and temporal logics}

\keywords{Cyber-physical systems, robustness, temporal logic, uncertainties.}

\maketitle


\section{Introduction}
\label{sec:introduction}

Cyber-Physical Systems (CPSs)~\cite{RLSS10} are characterised by software applications, henceforth called \emph{programs}, that must be able to deal with \emph{highly changing operational conditions}, henceforth referred to as the \emph{environment}.
Examples of these applications are the software components of unmanned vehicles, controllers, medical devices, the devices in a smart house, etc.
In these contexts, the \emph{behaviour} of a system is the result of the \emph{interplay of programs with their environment}.

The main challenge in the analysis and verification of these systems is then the dynamical and, sometimes, unpredictable behaviour of the environment.
The highly changing behaviour of physical processes can only be approximated in order to become computationally tractable, and can thus constitute a safety hazard for the devices (e.g., a gust of wind for an unmanned aerial vehicle that is autonomously setting its trajectory); some devices may appear, disappear, or become temporarily unavailable; faults or conflicts may occur (e.g., the program responsible for the ventilation of a room may open a window, conflicting with the one that has to limit the noise level); sensors may introduce measurement errors; etc.
Moreover, there is a type of security threat which is unique to CPSs: cyber-physical attacks.
For instance, an attacker can induce a series of perturbations in the sensed data in order to entail some unexpected, hazardous, behaviour of the system.

It is therefore fundamental for these systems to be \emph{robust against uncertainties} (and perturbations), i.e., roughly speaking, to be able to function correctly even in their presence.
In the literature, we can find a wealth of proposals of robustness properties, that differ in the underlying model (and thus also in how uncertainties are modelled), in the mathematical formalisation, or in whether they are designed to analyse a specific feature of systems behaviour.
We refer the interested reader to \cite{FKP16,RT16,SF13,So08} for an overview of these notions.
Our purpose with this paper is not to introduce a new notion of robustness, nor to argue whether one proposal is better than an another one: it seems natural to us that different application contexts call for different formalisations of the notion of robustness.
Hence, our goal is to provide the tools for the specification of robustness properties of CPSs, i.e., of any \emph{measure} of the capability of a program to tolerate perturbations in the environmental conditions and still fulfil its tasks.
To this purpose, we need to \emph{compare} the behaviour of the system with its behaviour under the effect of perturbations.
More precisely, we need to \emph{measure the differences} between systems behaviours, possibly at \emph{different moments in time}.
Hence, whenever we require a system to be robust against perturbations, we are actually specifying a \emph{temporal property of distances between systems behaviours}.
To the best of our knowledge, no formal framework for the specification of similar properties has ever been proposed in the literature.
\begin{quote}
\emph{
Our principal target in this paper is then to introduce a formal framework allowing us to specify and analyse the properties of distances between the behaviours of systems operating in the presence of uncertainties.
}
\end{quote}
Our framework consists in:
\begin{itemize}
\item A \emph{model} for systems behaviour, i.e. the \emph{\traccione{}} from \cite{CLT21}.
\item A \emph{temporal logic} for the specification of the desired properties, i.e., the \emph{\logicName{}} (\emph{\logicShort}) that we introduce in this paper.
\item A \emph{model checking algorithm}, based on statistical inference, for the verification of \logicShort{} specifications, that we introduce in this paper and is available at {\small\color{blue}{\url{https://github.com/quasylab/jspear}}} as part of our \emph{Software Tool for the Analysis of Robustness in the unKnown environment} (\textsc{Stark}).
\end{itemize}


\paragraph{The model: \tracciones}

We adopt the discrete time model of \cite{CLT21} and represent the program-environment interplay in terms of the changes they induce on a set of application-relevant data,  called \emph{\dataspace}.
At each step, both the program and the environment induce some changes on the current state of the \dataspace, called \emph{\datastate}, providing thus a new \datastate{} at the next step.
Those modifications are also subject to the presence of uncertainties, meaning that it is not always possible to determine exactly the values assumed by data at the next step.
Hence, we model the changes induced at each step as a \emph{probability measure} on the attainable \datastates.
For instance, we can assume the computation steps of the system to be determined by a Markov kernel.
The behaviour of the system is then entirely expressed by its \emph{\traccione}, i.e., the sequence of probability measures over the \datastates{} obtained at each step.
In other words, the \traccione{} is the discrete-time version of the cylinder of \emph{all possible trajectories} of the system.


\paragraph{A novel temporal logic: \logicShort}

In the literature, quantitative extensions of model checking have been proposed, like stochastic (or probabilistic) model checking \cite{BdAFK18,Bai16,KNP07,KP12}, and statistical model checking \cite{SVA04,SVA05,ZPC13,HHA17,BMS16}.
These techniques rely either on a full specification of the system to be checked, or on the possibility of simulating the system by means of a Markovian model, or Bayesian inference on samples.
Then, quantitative model checking is based on a specification of requirements in a probabilistic temporal logic, such as PCTL \cite{HJ94}, CSL \cite{ASSB00,ASSB96}, probabilistic variants of LTL \cite{Pnu77}, etc.
Similarly, if Runtime Verification \cite{BFFR18} is preferred to off-line verification, probabilistic variants of MTL \cite{Koy90} and STL \cite{MN04} were proposed \cite{TH16,SK16}.
In temporal logics in the quantitative setting, uncertainties are usually dealt with by imposing probabilistic guarantees on a given property of the behaviour of a system to be satisfied.

However, these properties are usually evaluated over a single trajectory of the system, and they do not allow us to study the probabilistic transient behaviour of systems, which takes into account the combined effects of the program-environment interplay and uncertainties.
Besides, there are some properties, like the robustness properties outlined above, that can only be expressed in terms of requirements on the evolution of distances between systems behaviours.
We also remark that, in general, if we need to compare behaviours with respect to different targets in time, or to check for various properties that depend on different aspects of behaviour, we cannot use a single distance to obtain a meaningful analysis, but we need to combine and compare different measures (each capturing a specific feature of systems behaviour).

For all these reasons, we introduce the novel temporal logic \emph{\logicName{}} (\emph{\logicShort}) that not only allows us to analyse and compare distances between \tracciones{} over a finite time horizon, but it also provides the means to specify those distances.
Specifically, \logicShort{} offers:
\begin{itemize}
\item a class of \emph{expressions} for the arbitrary definition of distances between \tracciones, potentially taking into account different objectives of the system in time;
\item special atomic propositions for the evaluation and comparison of distances between a \traccione{} and its disruption via a perturbation;
\item (classical) Boolean and temporal operators for the analysis of the evolution of the specified distances over a finite time horizon.
\end{itemize}


\paragraph{Checking \logicShort{} specifications.}

We provide a \emph{statistical model checking algorithm} for the verification of \logicShort{} specifications, consisting of three components:
\begin{inparaenum}
\item A simulation procedure for the \tracciones{} of systems, and their perturbed versions.
\item A mechanism, based on statistical inference, for the evaluation of the distances over \tracciones{}.
\item A procedure that verifies whether a \logicShort{} formula is satisfied, by inspecting its syntax.
\end{inparaenum}
Since our algorithms are based on statistical inference, we need to take into account the statistical error in the evaluation of formulae.
Hence, we also propose a three-valued semantics for \logicShort{} specifications, in which the truth value \emph{unknown} is added to true and false.
The value \emph{unknown} suggests that the parameters used to specify the desired robustness property need some tuning, or (if these are fixed) that a larger number of samples is needed to obtain a more precise evaluation of the distances.

To strengthen our contribution, we apply our framework to the analysis of the \emph{security} of CPSs, which is one of the major open challenges (and, thus, hottest topics) in this research field \cite{TS21,BVMG12}.
In detail, we consider a case-study from Industrial Control Systems: an engine system that is subject to cyber-physical attacks aimed at inflicting overstress of equipment \cite{GGIKLW2015}.
Due to page limits, we present only a simplified version of the system, inspired by \cite{LMMT21}, and we show how \logicShort{} specifications can be used to measure the robustness of the engine against the attacks, which are implemented as perturbations over the \traccione{} of the system.

All the algorithms and examples (including the specifications of system, perturbations, and formulae) have been implemented in our tool \textsc{Stark}, available at {\small\color{blue}{\url{https://github.com/quasylab/jspear}}}.


\paragraph{Organisation of contents}

After a concise presentation of the mathematical background in Section~\ref{sec:background}, we give a bird's eye view on \tracciones{} in Section~\ref{sec:tracciones}.
In Section~\ref{sec:dist_pert} we introduce the two basic ingredients necessary for the definition of distances over \tracciones{} and robustness properties: a (hemi)metric over probability measures and a simple language to model perturbations on data.
The core of our paper is in Section~\ref{sec:requirements}, where we present the \logicName.
Then, in Section~\ref{sec:statisticalmc} we outline the statistical model checking algorithm for \logicShort specifications, the analysis of the statistical errors and the related three-valued semantics of \logicShort.
We conclude the paper by briefly discussing related and future work in Section~\ref{sec:conclusion}.


\section{Background}
\label{sec:background}

In this section we present the mathematical background on which we build our contribution.
We present in detail only the notions that are crucial to understand the development of our work.
Conversely, the notions that are needed simply to guarantee the mathematical correctness of the definitions in this section are not explained; the interested reader can find their formal definition in any Analysis textbook.


\paragraph{Measurable spaces and measurable functions}
A \emph{$\sigma$-algebra} over a set $\Omega$ is a family $\F$ of subsets of $\Omega$ s.t.:\ 
\begin{inparaenum}
\item
$\Omega \in \F$, 
\item $\F$ is closed under complementation;
and
\item
$\F$ is closed under countable union. 
\end{inparaenum}
The pair $(\Omega, \Sigma)$ is called a \emph{measurable space} and the sets in $\Sigma$ are called \emph{measurable sets}, ranged over by $\mesA,\mesB,\dots$.
For an arbitrary family $\Phi$ of subsets of $\Omega$, the $\sigma$-algebra \emph{generated} by $\Phi$ is the smallest $\sigma$-algebra over $\Omega$ containing $\Phi$.
In particular, we recall that given a topological space $\Omega$, the \emph{Borel $\sigma$-algebra} over $\Omega$, denoted $\borel(\Omega)$, is the $\sigma$-algebra generated by the open sets in the topology.
For instance, given $n\in \nats^+$, we can consider the $\sigma$-algebra $\borel(\reals^n)$ generated by the open intervals in $\reals^n$.
Given two measurable spaces $(\Omega_i,\Sigma_i)$, $i = 1,2$, the \emph{product $\sigma$-algebra} $\Sigma_1 \otimes \Sigma_2$ is the $\sigma$-algebra on $\Omega_1 \times \Omega_2$ generated by the sets $\{\mesA_1 \times \mesA_2 \mid \mesA_i \in \Sigma_i\}$.
Given measurable spaces $(\Omega_1,\Sigma_1), (\Omega_2,\Sigma_2)$,
a function $f \colon \Omega_1 \to \Omega_2$ is said to be $\Sigma_1$-\emph{measurable} if $f^{-1}\!(\mesA_2) \!\in\! \Sigma_1$ for all $\mesA_2 \!\in\! \Sigma_2$, with $f^{-1}(\mesA_2) \! = \! \{\omega \!\in\! \Omega_1 \mid f(\omega) \in \mesA_2\}$.


\paragraph{Probability spaces and random variables}

A \emph{probability measure} on a measurable space $(\Omega,\Sigma)$ is a function $\mu \colon \Sigma \to [0,1]$ s.t.\  
$\mu(\Omega) = 1$,
$\mu(\mesA) \ge 0$ for all $\mesA \in \Sigma$ and
$\mu( \bigcup_{i \in I} \mesA_i) = \sum_{i \in I}\mu(\mesA_i)$ for every countable family of pairwise disjoint measurable sets $\{\mesA_i\}_{i\in I} \subseteq \Sigma$.
Then $(\Omega,\Sigma, \mu)$ is called a \emph{probability space}.
We let $\distrib{(\Omega,\F)}$ denote the set of all probability measures over $(\Omega,\F)$.

For $\omega \in \Omega$, the \emph{Dirac measure} $\dirac_{\omega}$ is defined by $\dirac_\omega (\mesA) = 1$, if $\omega \in \mesA$, and $\dirac_{\omega}(\mesA) = 0$, otherwise, for all $\mesA \in \F$.
Given a countable set of reals $(p_i)_{i \in I}$ with $p_i \ge 0$ and $\sum_{i \in I}p_i = 1$, the \emph{convex combination} of the probability measures $\{\mu_i\}_{i \in I} \subseteq \distrib{(\Omega,\F)}$ is the probability measure $\sum_{i \in I} p_i \cdot \mu_i$ in $\distrib{(\Omega,\F)}$ defined by $(\sum_{i \in I} p_i \cdot \mu_i)(\mesA) = \sum_{i \in I} p_i \mu_i(\mesA)$, for all $\mesA \in \F$.
A probability measure $\mu \in \distrib{(\Omega,\F)}$ is called \emph{discrete} if $\mu=\sum_{i \in I}p_i \cdot \dirac_{\omega_i}$, with $\omega_i \in \Omega$, for some countable set of indexes $I$.

Assume a probability space $(\Omega, \F, \mu)$ and a measurable space $(\Omega', \F')$.
A function $X \colon \Omega \to \Omega'$ is called a \emph{random variable} if it is $\Sigma$-measurable.
The \emph{distribution measure}, or \emph{cumulative distribution function} (cdf), of $X$ is the probability measure $\mu_X$ on $(\Omega',\F')$ defined by $\mu_X(\mesA)=\mu(X^{-1}(\mesA))$ for all $\mesA \in \F'$.
Given random variables $X_i$ from $(\Omega_i,\F_i, \mu_i)$ to $(\Omega'_i,\F'_i)$, $i=1,\ldots,n$, the collection $\rv = [X_1,\ldots,X_n]$ is called a \emph{random vector}. 
The cdf of a random vector $\rv$ is given by the \emph{joint} distribution of the random variables in it. 

\begin{remark}
Since we will consider Borel sets over $\real^n$ ($n \ge 1$), in the examples and explanations throughout the paper, we will use directly the cdf of a random variable rather than formally introducing the probability measure defined on the domain space.
Similarly, when the cdf is absolutely continuous with respect to the Lebesgue measure, then we shall reason directly on the \emph{probability density function} (pdf) of the random variable (i.e., the Radon-Nikodym derivative of the cdf with respect to the Lebesgue measure). 
Consequently, we shall use the more suggestive, and general, term \emph{\bfseries distribution} in place of the terms probability measure, cdf and pdf.
\end{remark}


\paragraph{The Wasserstein hemimetric}

A \emph{metric} on a set $\Omega$ is a function $m \colon \Omega \times \Omega \to \real^{\ge0}$ s.t.\ $m(\omega_1,\omega_2) = 0$ if{f} $\omega_1 = \omega_2$, $m(\omega_1,\omega_2) = m(\omega_2,\omega_1)$, and $m(\omega_1,\omega_2) \le m(\omega_1,\omega_3) + m(\omega_3,\omega_2)$, for all $\omega_1,\omega_2,\omega_3 \in \Omega$.
We obtain a \emph{hemimetric} by relaxing the first property to $m(\omega_1,\omega_2) = 0$ if $\omega_1=\omega_2$, and by allowing $m$ to not be symmetric.
A (hemi-)metric $m$ is $l$-\emph{bounded} if $m(\omega_1,\omega_2) \le l$ for all $\omega_1,\omega_2 \in\Omega$.
For a (hemi-)metric on $\Omega$, the pair $(\Omega,m)$ is a (\emph{hemi-})\emph{metric space}.

Given a (hemi-)metric space $(\Omega,m)$, the (hemi-)metric $m$ induces a natural topology over $\Omega$, namely the topology generated by the open $\varepsilon$-balls, for $\varepsilon > 0$, $B_{m}(\omega,\varepsilon) = \{\omega' \in \Omega \mid m(\omega,\omega') < \varepsilon\}$.
We can then naturally obtain the Borel $\sigma$-algebra over $\Omega$ from this topology.

In this paper we will make use of \emph{hemimetrics on distributions}.
To this end we will make use of the Wasserstein lifting \cite{W69} whose definition is based on the following notions and results.
Given a set $\Omega$ and a topology $T$ on $\Omega$, the topological space $(\Omega,T)$ is said to be \emph{completely metrisable} if there exists at least one metric $m$ on $\Omega$ such that $(\Omega, m)$ is a complete metric space and $m$ induces the topology $T$.
A \emph{Polish space} is a separable completely metrisable topological space.
In particular, we recall that:\ 
\begin{inparaenum}[(i)]
\item $\real$ is a Polish space; and
\item every closed subset of a Polish space is in turn a Polish space.
\end{inparaenum}
Moreover, for any $n \in N$, if $\Omega_1, \dots, \Omega_n$ are Polish spaces, then the Borel $\sigma$-algebra on their product coincides with the product $\sigma$-algebra generated by their Borel $\sigma$-algebras, namely $\borel( \bigtimes_{i=1}^n \Omega_i) = \bigotimes_{i=1}^n \borel(\Omega_i)$ (see, e.g., \cite[Lemma 6.4.2]{Bo07}).
These properties of Polish spaces are interesting for us since they guarantee that all the distributions we consider in this paper are Radon measures and, thus, the Wasserstein lifting is well-defined on them.
For this reason, we also directly present the Wasserstein hemimetric by considering only distributions on Borel sets.

\begin{definition}
[Wasserstein hemimetric]
\label{def:Wasserstein}
Consider a Polish space $\Omega$ and let $m$ be a hemimetric on $\Omega$.
For any two distributions $\mu$ and $\nu$ on $(\Omega,\borel(\Omega))$, the \emph{Wasserstein lifting} of $m$ to a distance between $\mu$ and $\nu$ is defined by
\[
\Wasserstein(m)(\mu,\nu) = \inf_{\w \in \W(\mu,\nu)} \int_{\Omega \times \Omega} m(\omega,\omega') \dd\w(\omega,\omega')
\]
where $\W(\mu,\nu)$ is the set of the \emph{couplings of $\mu$ and $\nu$}, namely the set of joint distributions $\w$ over the product space $(\Omega \times \Omega, \borel(\Omega \times \Omega))$ having $\mu$ and $\nu$ as left and right marginal, respectively, namely $\w(\mesA \times \Omega) = \mu(\mesA)$ and $\w(\Omega \times \mesA) = \nu(\mesA)$, for all $\mesA \in \borel(\Omega)$.
\end{definition}

Despite the Wasserstein distance was originally defined on a metric on $\Omega$, the Wasserstein hemimetric given above is well-defined.
We refer the interested reader to \cite{FR18} and the references therein for a formal proof of this fact.
In particular, the Wasserstein hemimetric is given in \cite{FR18} as Definition 7 (considering the compound risk excess metric defined in Equation (31) of that paper), and Proposition 4 in \cite{FR18} guarantees that it is indeed a well-defined hemimetric on $\distrib(\Omega,\borel(\Omega))$.
Moreover, Proposition 6 in \cite{FR18} guarantees that the same result holds for the hemimetric $m(x,y) = \max\{y-x,0\}$ which will play an important role in our work (cf.\ Definition~\ref{def:metric_DS} below).

\begin{remark}
As elsewhere in the literature, for simplicity and brevity, we shall henceforth use the term \emph{metric} in place of the term hemimetric.
\end{remark}


\section{\Tracciones}
\label{sec:tracciones}

We focus on \emph{systems} consisting of a \emph{program} and an \emph{environment}, whose interaction produces changes on a \emph{shared data space}, containing the values assumed by \emph{physical quantities}, \emph{sensors}, \emph{actuators}, and the \emph{internal variables} of the program. 
Being the result of the system runs, the evolution of data must be captured by the semantic model.
This observation is behind the semantic model from \cite{CLT21}, which bases on \emph{\tracciones}, i.e., sequences of distributions over the values assumed by data over time.
In~\cite{CLT21}, CPSs were specified by means of simple programs having a discrete behaviour and reading/writing data at each time instant, and \emph{probabilistic evolution functions} expressing the effects of the environment on data between two time instants.
There, we gave a detailed, technical presentation of the generation of \tracciones{} from those specifications.
However, for our purposes in the present paper, it is not necessary to report the semantic mapping from \cite{CLT21} in full detail, since for the verification of properties in our logic we can abstract from the programs and environments that generated the \tracciones. 
Hence, in this section, we limit ourselves to recap a few basic ingredients that are necessary to obtain a well defined notion of \traccione.

Technically, a \emph{\dataspace{}} is defined by means of a \emph{finite} set of \emph{variables} $\Var$.
Without loss of generality, we assume that for each $x \in \Var$ the domain $\D_x \subseteq \real$ is either a \emph{finite set} or a \emph{compact} subset of $\real$.
Notice that this means that $\D_x$ is a Polish space.
Moreover, as a $\sigma$-algebra over $\D_x$ we assume the Borel $\sigma$-algebra, denoted $\borel_x$.
As $\Var$ is a finite set, we can always assume it to be ordered, namely $\Var=\{x_1,\dots,x_{n}\}$ for a suitable $n \in \nats$.

\begin{definition}
[\Dataspace]
We define the \emph{\dataspace{}} over $\Var$, notation $\D_{\Var}$, as the Cartesian product of the variables domains, namely $\D_{\Var} = \bigtimes_{i = 1}^n \D_{x_i}$.
Then, as a $\sigma$-algebra on $\D_\Var$ we consider the product $\sigma$-algebra $\borel_{\D_\Var} = \bigotimes_{i=1}^n \borel_{x_i}$.
\end{definition}

When no confusion arises, we use $\D$  for $\D_{\Var}$ and $\borel_\D$ for $\borel_{\D_{\Var}}$.
Elements in $\D$ are the $n$-ples of the form $(v_1,\dots,v_n)$, with $v_i \in \D_{x_i}$, that can be identified by means of functions $\ds \colon \Var \to \real$, with $\ds(x) \in \D_x$ for all $x \in \Var$. 
Each function $\ds$ identifies a particular configuration in the \dataspace, and it is thus called a \emph{\datastate}.

\begin{definition}
[\Datastate]
\label{def:datastate}
A \emph{\datastate{}} is a mapping $\ds \colon \Var \to \real $ from variables to values, with $\ds(x) \in \D_x$ for all $x \in \Var$. 
\end{definition}

\begin{notation}
In the examples that follow we will use a variable name $x$ to denote all: the variable $x$, the function describing the evolution in time of the values assumed by $x$, and the random variable describing the distribution of the values that can be assumed by $x$ at a given time.
The role of the name $x$ will always be clear from the context.
\end{notation}

\begin{figure}
\begin{subfigure}{0.39 \textwidth}
\scalebox{1.0}{
\begin{tikzpicture}
\draw[thick](1,3.4) rectangle (4,4.2);
\node at (2.5,3.8){\scalebox{0.85}{\textcolor{black}{$\mathsf{CONTROLLER}$}}};
\draw[thick,orange](1.0,4.2) rectangle (2.0,4.6);
\node at (1.5,4.4){\scalebox{1.0}{\textcolor{orange}{$\mathit{speed}$}}};
\draw[thick,orange](3.0,4.2) rectangle (4.0,4.6);
\node at (3.5,4.4){\scalebox{1.0}{\textcolor{orange}{$\mathit{cool}$}}};
\draw[thick,blue](4.0,3.6) rectangle (5.0,4.0);
\node at (4.5,3.8){\scalebox{1.0}{\textcolor{blue}{$\mathit{ch\_in}$}}};
\node at (5.15,3.8){\scalebox{0.85}{\textcolor{blue}{$\Leftarrow$}}};
\draw[thick](1,0) rectangle (4,1.2);
\node at (2.5,0.2){\scalebox{1.0}{\textcolor{black}{$\mathsf{IDS}$}}};
\node at (1.6,1.0){\scalebox{1.0}{\textcolor{purple}{$\mathit{stress}$}}};
\node at (1.9,0.6){\scalebox{1.0}{\textcolor{purple}{$\mathit{p}_1,\dots,\mathit{p}_6$}}};
\draw[thick,red](1.0,1.2) rectangle (2.0,1.6);
\node at (1.5,1.4){\scalebox{1.0}{\textcolor{red}{$\mathit{temp}$}}};
\draw[double,double distance=1.5,-{implies},blue,thick](1.5,1.6)--(1.5,2.4);
\draw[thick,blue](0.85,2.4) rectangle (2.15,2.8);
\node at (1.5,2.6){\scalebox{1.0}{\textcolor{blue}{$\mathit{ch\_temp}$}}};
\draw[double,double distance=1.5,-{implies},blue,thick](1.5,2.8)--(1.5,3.4);
\draw[thick,blue](4.0,0.0) rectangle (5.0,0.4);
\node at (4.5,0.2){\scalebox{1.0}{\textcolor{blue}{$\mathit{ch\_out}$}}};
\node at (5.15,0.2){\scalebox{1.0}{\textcolor{blue}{$\Rightarrow$}}};
\draw[thick,blue](4.0,0.8) rectangle (5.0,1.2);
\node at (4.5,1.0){\scalebox{1.0}{\textcolor{blue}{$\mathit{ch\_wrn}$}}};
\node at (5.15,1.0){\scalebox{1.0}{\textcolor{blue}{$\Rightarrow$}}};
\draw[thick,blue](2.8,1.8) rectangle (4.2,2.2);
\node at (3.5,2){\scalebox{1.0}{\textcolor{blue}{$\mathit{ch\_speed}$}}};
\draw[double,double distance=1.5,-{implies},blue,thick](3.5,1.2)--(3.5,1.8);
\draw[double,double distance=1.5,-{implies},blue,thick](3.5,2.2)--(3.5,3.4);
\end{tikzpicture}
}
\caption{Schema of the engine}
\label{subfig:schema}
\end{subfigure}
\begin{subfigure}{0.5 \textwidth}
\scriptsize
\begin{tabular}{c@{\;\;}c@{\;\;}c}
Name & Domain & Role \\
\hline
\textcolor{red}{$\mathit{temp}$} & \textcolor{red}{$[0,150]$} &
\textcolor{red}{sensor} detecting the temperature,
\\
& & accessed directly by IDS and
\\
& &
through \textcolor{blue}{$\mathit{ch\_temp}$} by CONTROLLER \\
\hline
\textcolor{orange}{$\mathit{speed}$} &
\textcolor{orange}{$\{\mathsf{slow}, \mathsf{half}, \mathsf{full}\}$} &
\textcolor{orange}{actuator} regulating the speed \\
\hline
\textcolor{orange}{$\mathit{cool}$} & \textcolor{orange}{$\{\mathsf{on}, \mathsf{off}\}$} &
\textcolor{orange}{actuator} regulating the cooling \\
\hline
\textcolor{blue}{$\mathit{ch\_temp}$} & \textcolor{blue}{$[0,150]$} &
insecure \textcolor{blue}{channel} \\
\hline
\textcolor{blue}{$\mathit{ch\_speed}$} & 
\textcolor{blue}{$\{\mathsf{slow}, \mathsf{half}\}$} &
\textcolor{blue}{channel} used by IDS to order to  
\\
& & controller to set the value of \textcolor{orange}{$\mathit{speed}$} \\
\hline
\textcolor{blue}{$\mathit{ch\_wrn}$} & \textcolor{blue}{$\{\mathsf{ok}, \mathsf{hot}\}$} &
\textcolor{blue}{channel} used to raise warnings \\
& & 
in case of anomalies \\
\hline
\textcolor{blue}{$\mathit{ch\_out}$} & 
\textcolor{blue}{$\{\mathsf{half}, \mathsf{full}\}$} &
\textcolor{blue}{channel} used to send requests \\
& & to other engines
\\
\hline
\textcolor{blue}{$\mathit{ch\_in}$} & 
\textcolor{blue}{$\{\mathsf{half}, \mathsf{full}\}$} 
& \textcolor{blue}{channel} dual to \textcolor{blue}{$\mathit{ch\_out}$} \\
\hline
\textcolor{purple}{$\mathit{p_1},..,\mathit{p_6}$}  & \textcolor{purple}{$[0,150]$} 
&
\textcolor{purple}{internal variables} storing the last \\
& &  6 temperatures detected by \textcolor{red}{$\mathit{temp}$} \\ 
\hline
\textcolor{purple}{$\mathit{stress}$} & 
\textcolor{purple}{$[0,1]$} &
\textcolor{purple}{internal var.} carrying stress level \\
\end{tabular}
\caption{The variables}
\label{subfig:variables}
\end{subfigure}

\begin{subfigure}{1.0 \textwidth}
\scriptsize
\[
\begin{array}{l@{\;}l@{\;}l}
\mathsf{Eng}
& 
=
&
\mathsf{Ctrl} \parallel \mathsf{IDS} 
\text{ \textcolor{ForestGreen}{ //  The symbol ``$\parallel$'' denotes the classical parallel composition operator over processes}}
\\
\mathsf{Ctrl}
& 
=
&
\mathrm{if}\; 
[\textcolor{blue}{\mathit{ch\_temp}} \ge 99.8]\; 
\wact{\mathsf{on}}{\textcolor{orange}{\mathit{cool}}}.
\mathsf{Cooling}
\;
\mathrm{else}
\; 
\mathsf{Check} 
\\
& & 
\text{ \textcolor{ForestGreen}{// If temperature is too high then cooling is activated by setting actuator $\textcolor{orange}{\mathit{cool}}$ to $\mathsf{on}$}
(by assignment $\wact{\mathsf{on}}{\textcolor{orange}{\mathit{cool}}}$).}
\\
& &
\text{ \textcolor{ForestGreen}{// In this case, since prefixing ``$.$'' consumes one unit of time, $\mathsf{Cooling}$  will start at the next instant}}
\\
\mathsf{Cooling}
& 
=
&
\surd . \surd . \surd . \surd . 
\mathsf{Check}
\text{ \textcolor{ForestGreen}{//  the cooling is kept on for 4 more instants ($\surd$ consumes one time unit without affecting variables) }}
\\
\mathsf{Check}
& 
=
&
\mathrm{if}\; 
[\textcolor{blue}{\mathit{ch\_speed}} = \mathsf{slow}]\;
(\wact{\mathsf{slow}}{\textcolor{orange}{\mathit{speed}}},
\wact{\mathsf{off}}{\textcolor{orange}{\mathit{cool}}}).
\mathsf{Ctrl} \;
\\
& &
\text{\textcolor{ForestGreen}{ // If a slow down order comes from  IDS through channel \textcolor{blue}{$\mathit{ch\_speed}$}, then actuator \textcolor{orange}{$\mathit{speed}$} is set to $\mathsf{slow}$}}
\\ & &
\mathrm{else}\;
(\wact{\textcolor{blue}{\mathit{ch\_in}}}{\textcolor{orange}{\mathit{speed}}},
\wact{\mathsf{off}}{\textcolor{orange}{\mathit{cool}}}).
\mathsf{Ctrl} \;
\\
& &
\text{\textcolor{ForestGreen}{ // Otherwise, any speed regulation request from other engines, received through channel \textcolor{blue}{$\mathit{ch\_in}$}, is satisfied}}
\\
\mathsf{IDS}
& 
=
&
\mathrm{if}\; 
[\textcolor{red}{\mathit{temp}} > \mathsf{101} \wedge \textcolor{orange}{\mathsf{cool}} = \mathsf{off}]\;
\text{\textcolor{ForestGreen}{ // If this guard is true then there is an anomaly }}
\\ & &
(\wact{\mathsf{hot}}{\textcolor{blue}{\mathit{ch\_wrn}}},
\wact{\mathsf{slow}}{\textcolor{blue}{\mathit{ch\_speed}}},
\wact{\mathsf{full}}{\textcolor{blue}{\mathit{ch\_out}}}).
\mathsf{IDS}
\\
& & 
\text{\textcolor{ForestGreen}{ // If there is an anomaly, a warning is raised on channel \textcolor{blue}{$\mathit{ch\_wrn}$}, a slow down order is sent to $\mathsf{Ctrl}$ through channel}}
\\
& &
\text{\textcolor{ForestGreen}{//  \textcolor{blue}{$\mathit{ch\_speed}$}, and a speed up request is sent to
other engines through channel \textcolor{blue}{$\mathit{ch\_out}$}}}
\\ & &
\mathrm{else}\;
(\wact{\mathsf{ok}}{\textcolor{blue}{\mathit{ch\_wrn}}},
\wact{\mathsf{half}}{\textcolor{blue}{\mathit{ch\_speed}}},
\wact{\mathsf{half}}{\textcolor{blue}{\mathit{ch\_out}}}).
\mathsf{IDS}\\
& &
\text{ \textcolor{ForestGreen}{ //  Otherwise, the order and request are to proceed at regular speed}}
\end{array}
\]
\caption{The program $\mathsf{Eng}$
}
\label{subfig:logic}
\end{subfigure}

\begin{subfigure}{1.0\textwidth}
\scriptsize
\begin{equation*}
\label{eq:enginedynamics}
\begin{array}{r@{\;}l}
\textcolor{purple}{\mathit{p_k}}(\tau + 1) 
\; = & 
\begin{cases}
\begin{array}{cl}
\textcolor{red}{\mathit{temp}}(\tau) 
&
\text{ if } k=1
\text{ \textcolor{ForestGreen}{// internal variables $\textcolor{purple}{p_1},\dots,\textcolor{purple}{p_6}$ store the last 6 temperatures}}
\\
\textcolor{purple}{\mathit{p_{k-1}}}(\tau) &
\text{ if }
k=2,\dots,6
\end{array}
\end{cases}
\\[3.0 ex]
\textcolor{purple}{\mathit{stress}}(\tau + 1) 
\; = & 
\begin{cases}
\begin{array}{cl@{}l}
\max(1,\textcolor{purple}{\mathit{stress}}(\tau) + \mathsf{stressincr})
&
\text{if }
|\{k \mid \mathit{p_k}(\tau) \ge 100\} | > 3
&
\text{ \textcolor{ForestGreen}{ // \textcolor{purple}{$\mathit{stress}$} is augmented by constant $\mathsf{stressincr}$ iff}}
\\
\textcolor{purple}{\mathit{stress}}(\tau) 
&
\text{otherwise}
&
\text{ \textcolor{ForestGreen}{ // temp was too high for $>3$ instants over 6 }}
\end{array}
\end{cases}
\\[3.0 ex]
\textcolor{red}{\mathit{temp}}(\tau + 1) 
\; = & 
\textcolor{red}{\mathit{temp}}(\tau ) + v
\text{ \textcolor{ForestGreen}{ // value detected by 
\textcolor{red}{$\mathit{temp}$} varies by a value $v$ that is uniformly distributed in an interval, whose}}
\\
\textcolor{blue}{\mathit{ch\_temp}}(\tau + 1) 
\; = &
\textcolor{red}{\mathit{temp}}(\tau ) + v
\text{ \textcolor{ForestGreen}{ // endpoints depend on the state of actuators. The same value is taken by \textcolor{blue}{$ch\_temp$} in no attack case}}
\\[3.0 ex]
v \; \sim &
\begin{cases}
\begin{array}{cl}
{\mathcal U}[-1.2,-0.8] 
& 
\text{ if } 
\textcolor{orange}{\mathit{cool}}(\tau) = \mathsf{on}
\text{
\textcolor{ForestGreen}{ //
$v$ is negative if cooling is activated
 }}
\\
{\mathcal U}[0.1,0.3] 
& 
\text{ if } 
\textcolor{orange}{\mathit{cool}}(\tau) = \mathsf{off}
\text{ and }
\textcolor{orange}{\mathit{speed}}(\tau) = \mathsf{slow}
\text{ \textcolor{ForestGreen}{  //
otherwise $v$ is positive and depends on the speed
 }}
\\
{\mathcal U}[0.3,0.7] 
& 
\text{ if } 
\textcolor{orange}{\mathit{cool}}(\tau) = \mathsf{off}
\text{ and }
\textcolor{orange}{\mathit{speed}}(\tau) = \mathsf{half}
\\
{\mathcal U}[0.7,1.2] 
& 
\text{ if } 
\textcolor{orange}{\mathit{cool}}(\tau) = \mathsf{off}
\text{ and }
\textcolor{orange}{\mathit{speed}}(\tau) = \mathsf{full}.
\end{array}
\end{cases}
\end{array}
\end{equation*}
\caption{The environment}
\label{subfig:environment}
\end{subfigure}
\caption{The engine system}
\label{fig:engine}
\end{figure}

\begin{example}
\label{ex:engine_1}
We exemplify all our notions on a case study from~\cite{LMMT21}, sketched in Figure~\ref{fig:engine}, consisting in a refrigerated engine system. 
The program of this CPS has three tasks:
\begin{inparaenum}[(i)]
\item
regulate the engine speed, 
\item 
maintain the temperature within a specific range by means of a cooling system, and 
\item 
detect anomalies.
\end{inparaenum}
The first two tasks are on charge of a controller, the last one is took over by an intrusion detection system, henceforth IDS.
As shown in Figure~\ref{subfig:schema}, these two components use channels (depicted in \textcolor{blue}{blue}) to exchange information and to communicate with other programs.

The variables used in the system, and their role, are listed in Figure~\ref{subfig:variables}.
Notice that \textcolor{purple}{$\mathit{stress}$} is a variable that quantifies the level of equipment stress, which increases when the temperature stays too often above the threshold $100$, the idea being that the higher the stress the higher the probability of a wreckage.
The programs and the environment acting on these data are in Figures~\ref{subfig:logic} and~\ref{subfig:environment}, respectively.
At each scan cycle the program sets internal variables (depicted in \textcolor{purple}{purple}) and actuators (depicted in \textcolor{orange}{orange}) according to the values received from sensors (depicted in \textcolor{red}{red}), the IDS raises a warning if the status of sensors and actuators is unexpected, and the environment models the probabilistic evolution of the temperature.
We remark that the controller 
must use channel \textcolor{blue}{$\mathit{ch\_temp}$} to receive data from sensor \textcolor{red}{$\mathit{temp}$}.
Even though the use of channels is a common feature in CPSs, it exposes them to attacks, like those aimed at inflicting overstress of equipment \cite{GGIKLW2015} that we will present in Example~\ref{ex:perturbation}.
We assume that the engine can cooperate with other engines (e.g., in an aircraft with a \emph{left} and a \emph{right} engine), by receiving values on channel \textcolor{blue}{$\mathit{ch\_in}$} and sending values on \textcolor{blue}{$\mathit{ch\_out}$}.
If the engine cannot work at regular ($\mathsf{half}$) speed and must work at $\mathsf{slow}$, it asks to other engines to proceed at $\mathsf{full}$ speed to compensate the lack of performance.
\end{example}

In~\cite{CLT21}, an \emph{\traccione{}} is defined as a sequence of  distributions over \datastates{}.
This sequence is countable as a discrete time approach is adopted.
In the present paper we do not focus on how evolution sequences are generated: we simply assume a function $\cstep \colon \D \to \distrib(\D,\borel_\D)$ governing the evolution of the system, that is a \emph{Markov kernel}, and our evolution sequence is the \emph{Markov process} generated by $\cstep$ (see Definition~\ref{def:traccione} below).
Specifically, $\cstep(\ds)(\mesD)$ expresses the probability to reach a \datastate{} in $\mesD$ from the \datastate{} $\ds$ in one computation step.
Clearly, each system is characterised by a particular function $\cstep$.
For instance, for the engine system in Example~\ref{ex:engine_1}, function $\cstep$ is obtained following~\cite{CLT21} by combining the effects of the program in Figure~\ref{subfig:logic} and of the environment in Figure~\ref{subfig:environment}.
Moreover, each system will start its computation from a determined distribution over \datastates, called the \emph{initial distribution}.

\begin{definition}
[\Traccione]
\label{def:traccione}
Assume a Markov kernel $\cstep \colon \D \to \distrib(\D,\borel_\D)$ generating the behaviour of a system $\system$ having $\mu$ as initial distribution.
Then, the \emph{\traccione{}} of $\system$ is a countable sequence $\ES_\mu = \ES_\mu^0,\ES_\mu^1,\dots$ of distributions in $\distrib(\D,\borel_\D)$ such that, for all $\mesD \in \borel_{\D}$: 
\[
\ES_{\mu}^{0}(\mesD)  = \mu(\mesD)
\qquad\qquad
\ES_{\mu}^{i+1}(\mesD)  =  \int_{\D} \cstep(\ds)(\mesD) \dd \ES_{\mu}^{i}(\ds).
\]
\end{definition}

\begin{notation}
\label{notation:no_mu}
In order to avoid an overload of notation and possible confusion with symbol $\ds$, in the integrals we use symbol $\dd$ to denote the differential in place of the classical $d$.

Moreover, we shall write $\ES, \ES_1$ in place of, respectively, $\ES_\mu,\ES_{\mu_1}$, whenever the form of the initial distributions $\mu$ and $\mu_1$ do not play a direct role in the result, or definition, that we are discussing.
\end{notation}


\section{Towards distances between evolution sequences}
\label{sec:dist_pert}

Our principal target is to provide a temporal logic allowing us to study various robustness properties of systems, which boils down to being able to measure both, the effect of perturbations on the behaviour of a given system, and the capability of a (perturbed) system to fulfil its original tasks.
Hence, before proceeding to discuss the logic (in Section~\ref{sec:requirements}), we devote this section to a presentation of a means to establish how well a system is fulfilling its tasks, and a formalisation of perturbations and their effects on \tracciones.

Informally, the idea is to introduce a \emph{distance over distributions on \datastates{}} measuring their differences with respect to a \emph{given target}, and then use the operators of the logic to extend it to the \tracciones, while possibly taking into consideration \emph{different objectives and perturbations in time}.
In our setting, as in most application contexts, we have that both, the tasks of the system and possible perturbations over its behaviour, can be expressed in a purely data-driven fashion.
In the case of tasks, at any time step, any difference between the desired value of some parameters of interest and the data actually obtained can be interpreted as a flaw in the behaviour of the system.
Similarly, we can imagine perturbations as some random noise introduced on data, and represent them as functions mapping each datum in a distribution over data.
This noise is then propagated, in time, along the \traccione.


\subsection{A distance between distributions over data states}
\label{sec:metriche}

Following \cite{CLT21}, to capture how well a system is fulfilling a particular task, we use a \emph{penalty function} $\rho \colon \D \to [0,1]$, i.e., a continuous function that assigns to each \datastate{} $\ds$ a penalty in $[0,1]$ expressing how far the values of the parameters of interest (i.e., related to the considered task) in $\ds$ are from their desired ones.
Hence, $\rho(\ds) = 0$ if $\ds$ respects all the parameters. 
We can then use a penalty function $\rho$ to obtain a \emph{distance on \datastates}, i.e., the $1$-bounded \emph{hemimetric} $m_{\rho}$:
$m_{\rho}(\ds_1,\ds_2)$ expresses how much $\ds_2$ is \emph{worse} than $\ds_1$ according to the parameters of interest.
Since some parameters can be time-dependent, so is $\rho$: the $\tau$-penalty function $\rho_\tau$ is evaluated on the \datastates{} with respect to the values of the parameters expected at time $\tau$.

\begin{definition}
[Metric on \datastates{}]
\label{def:metric_DS}
For any time step $\tau$, let $\rho_{\tau} \colon \D \rightarrow [0,1]$ be the $\tau$-penalty function on $\D$.
The $\tau$-\emph{metric on \datastates{}} in $\D$, $m_{\rho,\tau} \colon \D \times \D \to [0,1]$, is defined, for all $\ds_1,\ds_2 \in \D$, by $m_{\rho,\tau}(\ds_1,\ds_2) = \max\{\rho_{\tau}(\ds_2)-\rho_{\tau}(\ds_1), 0 \}$.
\end{definition}

Notice that $m_{\rho,\tau}(\ds_1,\ds_2) > 0$ if and only if $\rho_\tau(\ds_2) > \rho_\tau(\ds_1)$, i.e., the penalty assigned to $\ds_2$ is higher than that assigned to $\ds_1$.
For this reason, we say that $m_{\rho,\tau}(\ds_1,\ds_2)$ expresses how much $\ds_2$ is worse than $\ds_1$ with respect to $\rho_\tau$.

\begin{example}
\label{ex:penalty}
Consider the engine system from Example~\ref{ex:engine_1}.
The penalty functions $\rho^{\mathit{wrn}}$, $\rho^{\mathit{temp}}$, and $\rho^{\mathit{stress}}$, defined, for all time steps $\tau$, by:
\[
\rho^{\mathit{wrn}}_\tau(\ds) = \begin{cases}
1 & \text{if }\ds(ch\_wrn) = \mathsf{hot},
\qquad\qquad 
\!\!\rho^{\mathit{temp}}_\tau(\ds) = \sfrac{|\ds(\mathit{ch\_temp})-\ds(\mathit{temp})|}{150}
\\
0 & \text{if }\ds(ch\_wrn) = \mathsf{ok}\;\,
\qquad\qquad
\rho^{\mathit{stress}}_\tau(\ds) = \ds(stress)
\end{cases}
\]
express, respectively, how far the level of alert raised by the IDS, the value carried by channel $\mathit{ch\_temp}$, and the level of stress  
are from their desired value.
These coincide with the value $\mathsf{ok}$, the value of sensor $\mathit{temp}$, and zero, respectively.
Penalty functions may be used also to express \emph{false negatives} and \emph{false positives}, which represent the average \emph{effectiveness}, and the average \emph{precision}, respectively, of the IDS to signal through channel $\mathit{ch\_wrn}$ that the engine system is under stress.
In detail, we may add two variables, $\mathit{fn}$ and $\mathit{fp}$, that quantify false negatives and false positives depending on $\mathit{stress}$ and $\mathit{ch\_wrn}$.
Both variables are initialised to $0$ and updated by adding 
\[
\scalebox{0.87}{
$\mathit{fn}(\tau+1) =
\frac{\tau * \mathit{fn}(\tau) + \max(0,\mathit{stress(\tau)}-\mathit{ch\_wrn(\tau)})}
{\tau+1}
\quad
\mathit{fp}(\tau+1) =
\frac{\tau * \mathit{fp}(\tau)  + 
\max(0,\mathit{ch\_wrn(\tau)}-\mathit{stress}(\tau))}{\tau+1}
$
}
\]
to Figure~\ref{subfig:environment}, with values $\mathsf{ok}$ and $\mathsf{hot}$ of $\mathit{ch\_wrn}$ interpreted as $0$ and $1$, respectively.
Then, false negatives and false positives are expressed by penalty functions $\rho^{\mathit{fn}}_\tau(\ds) = \ds(\mathit{fn})$ and  $\rho^{\mathit{fp}}_\tau(\ds) = \ds(\mathit{fp})$, respectively, expressing how far false negatives and false positives are from their ideal value $0$.
\end{example}

We now need to lift the hemimetric $m_{\rho,\tau}$ to a hemimetric over $\distrib(\D,\borel_\D)$.
In the literature, we can find a wealth of notions of function liftings doing so (see \cite{RKSF13} for a survey).
Among those, the \emph{Wasserstein lifting} \cite{W69} (introduced in Definition~\ref{def:Wasserstein} in Section~\ref{sec:background}) has been applied in several different contexts, from optimal transport \cite{Vil08} to process algebras, for the definition of \emph{behavioural metrics} (e.g., \cite{DGJP99,CLT20}), and from privacy \cite{CCP18,CCP20} to machine learning (e.g., \cite{ACB17,TBGS18}) for improving the stability of generative adversarial networks training.
We opted for this lifting since:
\begin{inparaenum}[(i)]
\item it preserves the properties of the ground metric, and
\item we can apply statistical inference to obtain good approximations of it, whose exact computation is tractable \cite{TK09,SFGSL12,CLT21arxiv}.
\end{inparaenum}

\begin{definition}
[Distance over $\distrib(\D,\borel_\D)$]
\label{def:metric_prob_DS}
We lift the metric $m_{\rho,\tau}$ to a metric over $\distrib(\D,\borel_\D)$ by means of the Wasserstein lifting as follows: for any two distributions $\mu,\nu$ on $(\D,\borel_\D)$
\[
\Wasserstein(m_{\rho,\tau})(\mu,\nu) = \inf_{\w \in \W(\mu,\nu)} \int_{\D \times \D} m_{\rho,\tau}(\ds,\ds') \dd\w(\ds,\ds').
\]
\end{definition}


\subsection{Perturbations}
\label{sec:perturbations}

We now proceed to formalise the notion of perturbations and their effects on \tracciones.
Intuitively, a \emph{perturbation} is the effect of unpredictable events on the current state of the system.
Hence, we find it natural to model it as a function that maps a \datastate{} into a distribution over \datastates.
To account for possibly repeated, or different effects in time of a single perturbation, we make the definition of perturbation function also time-dependent.
Informally, we can think of a perturbation function $\pfun$ as a list of mappings in which the $i$-th element describes the effects of $\pfun$ at time $i$.
As we are in a the discrete-time setting, we identify time steps with natural numbers.

\begin{definition}
[Perturbation function]
\label{def:perturbation}
A \emph{perturbation function} is a mapping $\pfun \colon \D \times \nat \to \distrib(\D,\borel_\D)$ such that, for each $\tau \in \nat$, $\pfun(\cdot,\tau) \colon \D \to \distrib(\D,\borel_\D)$ is such that the mapping $\ds \mapsto \pfun(\ds,\tau)(\mesD)$ is $\borel_\D$-measurable
for all $\mesD \in \borel_\D$. 
\end{definition}

We remark that to model the fact that the perturbation function $\pfun$ has no effects on the system at time $\tau$, it is enough to define $\pfun(\ds,\tau) = \dirac_\ds$ for all $\ds \in \D$.

To describe the perturbed behaviour of a system we need to take into account the effects of a given perturbation function $\pfun$ on its \traccione.
This can be done by combining $\pfun$ with the Markov kernel $\cstep$ describing the evolution of the system.

\begin{definition}
[Perturbation of an \traccione]
\label{def:traccione_perturbato}
Given an \traccione{} $\ES_\mu$, with $\mu$ as initial distribution, and a perturbation function $\pfun$, we define the \emph{perturbation of $\ES_\mu$ via $\pfun$} as the \traccione{} $\ES^\pfun_{\mu}$ obtained as follows: 
\begin{align*}
\ES^{\pfun,0}_{\mu} (\mesD) ={} & 
\int_{\D} \pfun(\ds,0)(\mesD) \dd \mu(\ds)
\\
\ES^{\pfun,i+1}_{\mu}(\mesD) ={} &
\int_{\D} \left(
\int_{\D} \pfun(\ds',i+1)(\mesD) \dd \cstep(\ds)(\ds')
\right)
\dd \ES^{\pfun,i}_{\mu}(\ds),
\end{align*}
where function $\cstep$ is the Markov kernel that generates $\ES_\mu$.
\end{definition}


\paragraph{Specifying perturbations}

We specify a perturbation function $\pfun$ via a (syntactic) perturbation $\p$ in following language $\Perturbations$:
\[
\p \;::= \quad
\nil \quad \mid \quad
\f@\tau \quad \mid \quad
\p_1 \,;\, \p_2 \quad \mid \quad
\p^n
\]
where $\p$ ranges over $\Perturbations$, $n$ and $\tau$ are finite natural numbers, and:
\begin{itemize}
\item $\nil$ is the perturbation with \emph{no effects}, i.e., at each time step it behaves like the identity function $\mathtt{id} \colon \D \to \distrib(\D,\borel_\D)$ such that $\mathtt{id}(\ds) = \delta_\ds$ for all $\ds \in \D$;
\item $\f@\tau$ is an \emph{atomic perturbation}, i.e., a function $\f \colon \D \to \distrib(\D,\borel_\D)$ such that the mapping $\ds \mapsto \f(\ds)(\mesD)$ is $\borel_\D$-measurable for all $\mesD \in \borel_\D$, and that is applied precisely after $\tau$ time steps from the current instant;
\item $\p_1 \,;\, \p_2$ is a \emph{sequential perturbation}, i.e., perturbation $\p_2$ is applied at the time step subsequent to the (final) application of $\p_1$;
\item $\p^n$ is an \emph{iterated perturbation}, i.e., perturbation $\p$ is applied for a total of $n$ times.
\end{itemize}
Despite its simplicity, this language allows us to define some non-trivial perturbation functions that we can use to test systems behaviour.
The following example serves as a demonstration in the case of the engine system.
(Unspecified perturbations are assumed to behave like the identity.)

\begin{example}
\label{ex:perturbation}
In~\cite{LMMT21} several cyber-physical attacks tampering with sensors or actuators of the engine system aiming to inflict \emph{overstress of equipment}~\cite{GGIKLW2015} were described.
Here we show how those attacks can be modelled by employing our perturbations.
Consider sensor $\mathit{temp}$.
There is an attack that tricks the controller by adding a negative offset 
$o \in \real^{\le 0}$, uniformly distributed in an interval $[l_o,r_o]$,
to the value carried by the insecure channel $\mathit{ch\_temp}$ for $n$ units of time.
This attack aims to delay the cooling phase, forcing the system to work for several instants at high temperatures thus accumulating stress.
It can be modelled via the perturbation $\p_{\mathit{temp},o,\tau,n} = (\mathtt{id}@0)^{\tau};(\f_{\mathit{temp},o}@0)^n$ where $\tau$ is the time at which the attack starts, and $\f_{\mathit{temp},o}(\ds)$ is the distribution of the random variable $O(o,\ds)$, for $o \sim \mathcal{U}[l_o,r_o]$, defined for all $\ds \in \D$ by $O(o,\ds) = \ds'$ where $\ds'(\mathit{ch\_temp}) = \ds(\mathit{ch\_temp}) + o$, and $\ds'(x) = \ds(x)$ for all other variables in Figure~\ref{subfig:variables}.

Consider now actuator $\mathit{cool}$.
There is an attack that interrupts the cooling phase by switching off the cooling system as soon as the temperatures goes below $99.8 - t$ degrees, for some positive value $t \in \real^{\ge 0}$.
This attack, aiming to force the system to reach quickly high temperatures after the beginning of a cooling phase, is \emph{stealth}, meaning that the IDS does not detect it. 
It is modelled by using the perturbation $\p_{\mathit{cool},t,n}$ defined by
$\p_{\mathit{cool},t,n} = (\f_{\mathit{cool},t}@0)^n$, where 
$\f_{\mathit{cool},t}(\ds) = \dirac_\ds$, if $\ds(\mathit{temp}) \ge 99.8-t$, and 
$\f_{\mathit{cool},t}(\ds) = \delta_{\ds''}$, if $\ds(\mathit{temp}) < 99.8-t$, where $\ds''(\mathit{cool}) = \mathsf{off}$ and
$\ds''(x)=\ds(x)$ for all other variables in Figure~\ref{subfig:variables}.
\end{example}

Each $\p \in \Perturbations$ denotes a perturbation function as in Definition~\ref{def:perturbation}, namely a mapping of type $\D \times \nat \to \distrib(\D,\borel_\D)$.
To obtain it, we make use of two auxiliary functions: $\mathsf{effect}(\p)$, that describes the effect of $\p$ at the current step, and $\mathsf{next}(\p)$, that identifies the perturbation that will be applied at the next step.
Both functions are defined inductively on the structure of perturbations.
\begin{align*}
&
\mathsf{effect}(\nil) = \mathtt{id}
& &
\mathsf{effect}(\f@\tau) = 
\begin{cases}
\mathtt{id} & \text{ if } \tau>0, \\
\f & \text{ if } \tau=0
\end{cases}
\\
&
\mathsf{effect}(\p^n) = \mathsf{effect}(\p)
& &
\mathsf{effect}(\p_1; \p_2) = \mathsf{effect}(\p_1)
\\
&
\mathsf{next}(\nil) = \nil
& &
\mathsf{next}(\f@\tau) = 
\begin{cases}
\f@(\tau-1) & \text{ if } \tau>0, \\
\nil  & \text{ otherwise}
\end{cases}
\\
&
\mathsf{next}(\p^n) =
\begin{cases}
\mathsf{next}(\p);\p^{n-1} & \text{ if } n>0,\\
\nil  & \text{ otherwise}
\end{cases}
& &
\mathsf{next}(\p_1; \p_2) = 
\begin{cases}
\mathsf{next}(\p_1); \p_2 & \text{ if } \mathsf{next}(\p_1) \neq \nil, \\
\p_2 & \text{ otherwise.}
\end{cases}
\end{align*}
We can now define the semantics of perturbations as the mapping $\psem{\cdot} \colon \Perturbations \to ( \D \times \nat \to \distrib(\D,\borel_\D))$ such that, for all $\ds \in \D$ and $i \in \nat$:
\[
\psem{\p}(\ds,i) = \mathsf{effect}(\mathsf{next}^i(\p))(\ds),
\]
where $\mathsf{next}^0(\p) = \p$ and $\mathsf{next}^i(\p) = \mathsf{next}(\mathsf{next}^{i-1}(\p))$, for all $i > 0$.

\begin{proposition}
For each $\p \in \Perturbations$, the mapping $\psem{\p}$ is a perturbation function.
\end{proposition}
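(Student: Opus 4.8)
The plan is to show, by induction on the structure of $\p \in \Perturbations$, that $\psem{\p}(\ds,i)$ satisfies the measurability requirement of Definition~\ref{def:perturbation}: for every $i \in \nat$ and every $\mesD \in \borel_\D$, the map $\ds \mapsto \psem{\p}(\ds,i)(\mesD)$ is $\borel_\D$-measurable. Since $\psem{\p}(\ds,i) = \mathsf{effect}(\mathsf{next}^i(\p))(\ds)$, it suffices to establish two auxiliary facts: (a) for every $\p \in \Perturbations$, $\mathsf{next}(\p)$ is again a syntactically well-formed element of $\Perturbations$, so that $\mathsf{next}^i(\p)$ is defined for all $i$; and (b) for every $\p \in \Perturbations$, the map $\ds \mapsto \mathsf{effect}(\p)(\ds)(\mesD)$ is $\borel_\D$-measurable for all $\mesD \in \borel_\D$. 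Fact (a) is an immediate structural check on the four defining clauses of $\mathsf{next}$ (the only mild point being that $\mathsf{next}(\p^n)$ and $\mathsf{next}(\p_1;\p_2)$ produce sequential compositions, which are again in $\Perturbations$, and that the natural-number parameters only ever decrease, so the recursion is well-founded).

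For fact (b) I would proceed by cases on the outermost constructor, following the definition of $\mathsf{effect}$. For $\nil$ we have $\mathsf{effect}(\nil) = \mathtt{id}$, and $\ds \mapsto \delta_\ds(\mesD) = \mathds{1}_{\mesD}(\ds)$ is the indicator of a Borel set, hence $\borel_\D$-measurable. For $\f@\tau$, $\mathsf{effect}$ returns either $\mathtt{id}$ (handled as above) or the atomic function $\f$, whose measurability $\ds \mapsto \f(\ds)(\mesD)$ is exactly the hypothesis built into the syntax of atomic perturbations. The cases $\p^n$ and $\p_1;\p_2$ are trivial since $\mathsf{effect}(\p^n) = \mathsf{effect}(\p)$ and $\mathsf{effect}(\p_1;\p_2) = \mathsf{effect}(\p_1)$, so the claim follows directly from the induction hypothesis applied to the (structurally smaller) subterm. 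Note there is no new integration or composition happening inside $\mathsf{effect}$ itself — the recursive structure of a perturbation only unfolds through iterated applications of $\mathsf{next}$, and each $\mathsf{next}$ step merely rearranges subterms without changing the pool of atomic effects — so combining (a) and (b): for any $i$, $\mathsf{next}^i(\p)$ is a perturbation built from the same atomic components as $\p$, and $\mathsf{effect}$ of it is one of those atoms or $\mathtt{id}$, hence measurable.

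The only genuinely delicate point — the one I would treat as the main obstacle — is making the induction in (a) and (b) rigorous given that $\mathsf{next}$ can \emph{increase} a certain notion of syntactic size (e.g., $\mathsf{next}(\p^n) = \mathsf{next}(\p);\p^{n-1}$ duplicates nothing but reshuffles), so one must choose the right well-founded measure. A clean choice is the lexicographic order on the pair $(\text{sum of all natural-number annotations and exponents in }\p,\ \text{structural size of }\p)$: every clause of $\mathsf{next}$ strictly decreases the first component or keeps it equal while decreasing the second, which simultaneously proves termination of $\mathsf{next}^i$ and licenses the structural induction for $\mathsf{effect}$. Once that measure is fixed, everything else is the routine case analysis sketched above, and the measurability of each $\psem{\p}(\cdot,i)(\mesD)$ follows, establishing that $\psem{\p}$ is a perturbation function.
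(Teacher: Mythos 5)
Your proof is correct and follows essentially the same route as the paper's (one-line) argument: every value of $\mathsf{effect}(\mathsf{next}^i(\p))$ is either $\mathtt{id}$ or one of the atomic functions $\f$ occurring in $\p$, each of which is measurable by definition, so $\psem{\p}$ satisfies Definition~\ref{def:perturbation}. The well-foundedness worry you flag is a non-issue — each single application of $\mathsf{next}$ is defined by ordinary structural recursion (its recursive calls are on proper subterms), and $\mathsf{next}^i$ is just $i$-fold iteration of an already well-defined function — but this extra caution does no harm.
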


\begin{proof}
Since, by definition, each $\f$ occurring in atomic perturbations is such that $\ds \mapsto \f(\ds)(\mesD)$ is $\borel_\D$-measurable for all $\mesD \in \borel_\D$, and the same property trivially holds for the identity function $\mathtt{id}$, it is immediate to conclude that $\psem{\p}$ satisfies Definition~\ref{def:perturbation} for each $\p \in \Perturbations$.
\end{proof}


\section{The \logicName}
\label{sec:requirements}

We now present our \emph{\logicName{}} (\emph{\logicShort}), which is the core of our tool for the specification and analysis of distances between nominal and perturbed \tracciones{} over a finite time horizon, denoted by $\h$. This allows for studying systems robustness against perturbations.

This is made possible by combining novel atomic propositions, allowing for the evaluation of various distances between a distribution in the \traccione{} of the nominal system and a distribution in the \traccione{} of the perturbed system, with classical Boolean and temporal operators allowing for extending these evaluations to the entire \tracciones{}.
Informally, we use the atomic proposition $\Delta(\esp,\p) \bowtie \eta$ to evaluate the distance specified by an expression $\exp$ at a given time step between a given \traccione{} and its perturbation, specified by some $\p \in \Perturbations$, and to compare it with the threshold $\eta$.

\begin{definition}[\logicShort]
\label{def:evtl}
The modal logic \logicShort{} consists in the set of formulae $\logicSymbol$ defined by:
\[
\varphi\; :: =
\top \;\; \mid \;\;
\Delta(\esp,\p) \bowtie \eta \;\; \mid \;\;
\neg \varphi \;\; \mid \;\;
\varphi \wedge \varphi \;\; \mid \;\;
\funtil{\varphi}{I}{\varphi}
\]
where $\varphi$ ranges over $\logicSymbol$, ${\bowtie} \in \{<,\le,\ge,>\}$, $\eta \in [0,1]$, $\p$ is a perturbation function, $I \subseteq [0,\h]$ is a bounded time interval, and $\esp$ ranges over expressions in $\espClass$ and defined syntactically as follows:
\begin{align*}
\esp\; :: ={} &
\sx{\rho} \;\; \mid \;\;
\dx{\rho} \;\; \mid \;\;
\eventually{I} \esp \;\;\mid\;\;
\always{I} \esp \;\;\mid\;\;
\esp \until{I} \esp \;\;\mid
\\
&
\meet (\esp, \esp) \;\;\mid\;\;
\join (\esp, \esp) \;\;\mid\;\;
\sum_{k \in K} w_k \cdot \esp_k \;\;\mid\;\;
\sigma(\esp,\bowtie \zeta) 
\end{align*}
where $\rho$ ranges over penalty functions, $K$ is a finite set of indexes, $w_k \in (0,1]$ for each $k \in K$, $\sum_{k \in K} w_k = 1$, and $\zeta \in [0,1]$.
\end{definition}

We use expressions $\esp$ to define distances between two \tracciones.
\emph{Atomic expressions} $\sx{\rho}$ and $\dx{\rho}$ are used to evaluate the distance between two distributions at a given time step with respect to the penalty function $\rho$.
Then we provide three \emph{temporal expression} operators, namely $\eventually{I}$, $\always{I}$ and $\until{I}$, allowing for the evaluation of minimal and maximal distances over time.
The $\meet,\join$ and \emph{convex combination} $\sum_{K} w_k$ operators allow us to evaluate the corresponding functions over expressions.
The \emph{comparison operator} $\sigma(\esp,\bowtie \zeta)$ returns a value in $\{0,1\}$ used to establish whether the evaluation of $\esp$ is in relation $\bowtie$ with the threshold $\zeta$.
Summarising, by means of expressions we can measure the differences between \tracciones{} with respect to various tasks (penalty functions) and temporal constraints.

Formulae are evaluated in an \traccione{} and a time instant.
The semantics of Boolean operators and bounded until is defined as usual, while the semantics of atomic propositions follows from the evaluation of expressions.

\begin{definition}
[Semantics of formulae]
\label{def:phi_semantics}
Let $\ES$ be an \traccione, and $\tau$ a time step.
The satisfaction relation $\models$ is defined inductively on the structure of formulae as:
\begin{itemize}
\item $\ES,\tau \models \top$ for all $\ES,\tau$;
\item $\ES,\tau \models \Delta(\esp,\p) \bowtie \eta$ if{f} $\sat{\esp}{\ES,\ES_{\mid_{\psem{\p},\tau}}}{\tau} \bowtie \eta$;
\item $\ES,\tau \models \neg \varphi$ if{f} $\ES,\tau \not\models \varphi$;
\item $\ES,\tau \models \varphi_1 \wedge \varphi_2$ if{f} $\ES,\tau \models \varphi_1$ and $\ES,\tau \models \varphi_2$;
\item $\ES,\tau \models \funtil{\varphi_1}{I}{\varphi_2}$ if{f} there is a $\tau' \in I+\tau$ s.t.\ $\ES,\tau' \models \varphi_2$ and for all $\tau'' \in I+\tau, \tau'' < \tau'$ it holds that $\ES,\tau'' \models \varphi_1$;  
\end{itemize}
where, for $I = [a,b]$, we let $I + \tau = [\min\{a+\tau,\h\},\min\{b+\tau,\h\}]$.
\end{definition}

Let us focus on atomic propositions.
We have that the \traccione{} $\ES$ at time $\tau$ satisfies the formula $\Delta(\esp,\p) \bowtie \eta$ if and only if the distance defined by $\esp$ between $\ES$ and $\ES_{\mid_{\psem{\p},\tau}}$ is $\bowtie \eta$, where $\ES_{\mid_{\psem{\p},\tau}}$ is the \traccione{} obtained by applying the perturbation $\p$ to $\ES$ at time $\tau$.
Formally, $\ES_{\mid_{\psem{\p},\tau}}$ is defined as:
\[
(\ES_{\mid_{\psem{\p},\tau}})^t = 
\begin{cases}
\ES^t & \text{ if } t < \tau, \\
\ES^{\psem{\p},t-\tau}_{\ES^\tau} & \text{ if } t \ge \tau.
\end{cases}
\]

Hence, for the first $\tau-1$ steps $\ES_{\mid_{\psem{\p},\tau}}$ is identical to $\ES$.
At time $\tau$ the perturbation $\p$ is applied, and the distributions in $\ES_{\mid_{\psem{\p},\tau}}$ are thus given by the perturbation via $\psem{\p}$ of the \traccione{} having $\ES^{\tau}$ as initial distribution (Definition~\ref{def:traccione_perturbato}).
It is worth noticing that by combining atomic propositions with temporal operators we can apply (possibly) different perturbations at different time steps, thus allowing for the analysis of systems behaviour in complex scenarios.

As expected, other operators can be defined as macros in our logic:
\[
\begin{array}{ll}
\varphi_1 \vee \varphi_2 \equiv \neg (\neg\varphi_1 \wedge \neg \varphi_2)
\qquad & \qquad
\varphi_1 \Longrightarrow \varphi_2 \equiv \neg\varphi_1 \vee \varphi_2 \\[.1cm]
\fevent{I}{\varphi} \equiv \funtil{\top}{I}{\varphi}
\qquad & \qquad
\fglob{I}{\varphi} \equiv \neg \fevent{I}{\neg \varphi}.
\end{array} 
\]

Finally, we present the evaluation of expressions.
As they define distances over \tracciones, which are sequences over time of distributions, they are evaluated over two \tracciones{} and a time $\tau$, representing the time step at which (or starting from which) the differences in the two sequences become relevant.

\begin{definition}
[Semantics of expressions]
\label{def:esp_semantics}
Let $\ES_1,\ES_2$ be to \tracciones, and $\tau$ be a time step.
The evaluation of expressions in the triple $\ES_1,\ES_2,\tau$ is the function $\sat{\cdot}{\ES_1,\ES_2}{\tau} \colon \espClass \to [0,1]$ defined inductively on the structure of expressions as follows:
\begin{itemize}
\item $\sat{\sx{\rho}}{\ES_1,\ES_2}{\tau} = \Wasserstein(m_{\rho,\tau})(\ES_{1}^{\tau},\ES_{2}^{\tau})$;
\item $\sat{\dx{\rho}}{\ES_1,\ES_2}{\tau} = \Wasserstein(m_{\rho,\tau})(\ES_{2}^{\tau},\ES_{1}^{\tau})$;
\item $\sat{\eventually{I} \esp}{\ES_1,\ES_2}{\tau} = \min_{t \in I + \tau} \sat{\esp}{\ES_1,\ES_2}{t}$;
\item $\sat{\always{I} \esp}{\ES_1,\ES_2}{\tau} = \max_{t \in I + \tau} \sat{\esp}{\ES_1,\ES_2}{t}$;
\item $\sat{\esp_1 \until{I} \esp_2}{\ES_1,\ES_2}{\tau} = \min_{t \in I + \tau} \max\{\sat{\esp_2}{\ES_1,\ES_2}{t}, \max_{t' \in I+\tau, t'< t} \sat{\esp_1}{\ES_1,\ES_2}{t'}\}$;
\item $\sat{\meet(\esp_1,\esp_2)}{\ES_1,\ES_2}{\tau} = \min\{\sat{\esp_1}{\ES_1,\ES_2}{\tau},\sat{\esp_2}{\ES_1,\ES_2}{\tau}\}$;
\item $\sat{ \join(\esp_1,\esp_2)}{\ES_1,\ES_2}{\tau} = \max\{\sat{\esp_1}{\ES_1,\ES_2}{\tau},\sat{\esp_2}{\ES_1,\ES_2}{\tau}\}$;
\item $\sat{\sum_{k \in K} w_k \esp_k}{\ES_1,\ES_2}{\tau} = \sum_{k \in K} w_k \cdot \sat{\esp_k}{\ES_1,\ES_2}{\tau}$;
\item $\sat{\sigma(\esp,\bowtie \zeta)}{\ES_1,\ES_2}{\tau} = 
\begin{cases}
0 & \text{ if } \sat{\esp}{\ES_1,\ES_2}{\tau} \bowtie \zeta, \\
1 & \text{ otherwise.}
\end{cases}$
\end{itemize}
\end{definition}

The evaluation bases on two atomic expressions, $\sx{\rho}$ and $\dx{\rho}$, where $\rho$ is a penalty function.
Given the \tracciones{} $\ES_1$, $\ES_2$, and a time $\tau$, we use $\sx{\rho}$ to measure the distance between the distributions reached by the two \tracciones{} at time $\tau$, i.e., $\ES^{1}_{\tau}$ and $\ES^{2}_{\tau}$, with respect to the penalty function $\rho$.
Formally, according to Definition~\ref{def:metric_prob_DS}, we use $\sx{\rho}$ to measure the distance $\Wasserstein(m_{\rho,\tau})(\ES_{1}^{\tau},\ES_{2}^{\tau})$.
Conversely, $\dx{\rho}$ measures the distance $\Wasserstein(m_{\rho,\tau})(\ES_{2}^{\tau},\ES_{1}^{\tau})$.
Having penalty functions as parameters of atomic expressions will allow us to study the differences in the behaviour of systems with respect to different data and objectives in time.
We remark that, in our setting, the need for two atomic expressions is justified by the choice of using a hemimetric to evaluate the distance between (distributions on) \datastates.
The temporal expression operators $\eventually{I},\always{I}$, and $\until{I}$ can be thought of as the quantitative versions of classical (bounded) temporal operators, respectively, \emph{eventually}, \emph{always}, and \emph{until}.
Their semantics follows by associating existential quantifications with minima, and universal quantifications with maxima.
Intuitively, asking for the existence of a time step at which a given formula is satisfied, corresponds to looking for a best case scenario, which in our quantitative setting can be matched with looking for the minimum over distances.
Similarly, when universal quantification is considered, the result can only be as good as the worst scenario, thus corresponding to the evaluation of the maximum over distances.
Hence, the evaluation of $\eventually{I}\esp$ is obtained as the minimum value of the distance $\esp$ over the time interval $I$.
Dually, $\always{I}\esp$ gives us the maximum value of $\esp$ over $I$.
Then, the evaluation of $\esp_1 \until{I} \esp_2$ follows from the classical semantics of bounded until (see Definition~\ref{def:phi_semantics}), accordingly. 
The evaluation of $\meet,\join$, and $\sum_K r_k$ is as expected.
The expression $\sigma(\esp,\bowtie \zeta)$ evaluates to $0$, i.e., the minimum distance, if the evaluation of $\esp$ is $\bowtie \zeta$; otherwise it evaluates to $1$, i.e., the maximum distance.
Informally, the comparison operator $\sigma$ can be combined with temporal expression operators to check whether several constraints of the form $\bowtie \zeta_i$ are satisfied over a time interval under a single application of a perturbation function (see Example~\ref{ex:different_temporal} below).


\paragraph{Robustness properties}

As an example of a robustness property that can be expressed in \logicShort, we consider the property of \emph{adaptability} presented in \cite{CLT21}.
Informally, we say that a system is adaptable if whenever its initial behaviour is affected by the perturbations, it is able to react to them and regain its intended behaviour within a given amount of time.
Given the thresholds $\eta_1,\eta_2 \in [0,1)$, an observable time $\tilde{\tau}$, and a penalty function $\rho$, we say that a system is ($\eta_1$, $\eta_2$, $\tilde{\tau}$)-adaptable with respect to $\rho$ if whenever a perturbation $\p$ occurring at time $0$ modifies the initial behaviour of the system by at most $\eta_1$, then the pointwise distance between the \tracciones{} of the two systems (the nominal one and its perturbation via $\p$) after time $\tilde{\tau}$ is bounded by $\eta_2$.
Hence, adaptability can be expressed in \logicShort{} as follows, where we recall that $\h$ is the finite time horizon for the evaluation of the property of interest:
\[
\Delta( \sx{\rho}, \p ) \le \eta_1
\Longrightarrow
\Delta( \always{[\tilde{\tau},\h]} \sx{\rho}, \p ) \le \eta_2.
\]
That of adaptability is just one simple example of a robustness property that can be expressed in \logicShort.
We now provide various examples of formulae that can be used for the analysis of the robustness of the engine system from Example~\ref{ex:engine_1}, and that should help the reader to further grasp the role of the operators in \logicShort.

\begin{example}
\label{ex:different_temporal}
Consider the penalty functions given in Example~\ref{ex:penalty}, and the perturbations from Example~\ref{ex:perturbation}.
We can build a formula $\varphi_1$ expressing that the attack on the insecure channel $\mathit{ch\_temp}$ is successful.
We recall that the attacker provides a negative temperature offset $o \sim \mathcal{U}[l_o,r_o]$ for $n$ time instants starting from the current instant $\tau$.
This attack is successful if, whenever the difference observed along the attack window $[\tau,\tau+n-1]$ between the physical value of temperature and that read by the controller is in the interval $[\eta_1 , \eta_2]$, for suitable values $\eta_1$ and $\eta_2$, then the level of the alarms raised by the IDS remains below a given stealthiness threshold $\eta_3$, 
and the level of system stress overcomes a danger threshold $\eta_4$ within $k$ unit of time:
\begin{align*}
\varphi_1 ={} & 
\fevent{[0,\h]}{(\varphi_1' \Longrightarrow \varphi_1'')} \\
\varphi_1' ={} &
\Delta(\eventually{[\tau,\tau+n-1]}\sx{\rho^{\mathit{temp}}},\p_{\mathit{temp},o,\tau,n}) \ge \eta_1
\wedge
\Delta(\always{[\tau,\tau+n-1]}\sx{\rho^{\mathit{temp}}},\p_{\mathit{temp},o,\tau,n}) \le \eta_2
\\
\varphi_1'' ={} & 
\Delta(\always{[\tau,k]}\sx{\rho^{\mathit{wrn}}}
,\p_{\mathit{temp},o,\tau,n}) \le \eta_3
\wedge
\Delta(\always{[\tau,k]}\sx{\rho^{\mathit{stress}}},\p_{\mathit{temp},o,\tau,n}) \ge \eta_4.
\end{align*}

Consider now the attack on actuator $\mathit{cool}$ described in Example~\ref{ex:perturbation}.
We provide a formula $\varphi_2$ expressing that such an attack fails within $k$ units of time.
This happens when the level of the alarms raised by the IDS goes above a given threshold $\zeta_2$ at some instant $\tau' \in [0,k]$, thus implying that the attack is detected, cowhile the level of stress remains below an acceptable threshold $\zeta_1$:
\begin{equation*}
\varphi_2 = \Delta
\left(
\sigma(\sx{\rho^{\mathit{stress}}}, < \zeta_1) 
\until{[0,k]}
\sigma(\sx{\rho^{\mathit{wrn}}}, > \zeta_2),
\p_{\mathit{cool},t,n}
\right)
< 1.
\end{equation*}
Notice that in the formula $\varphi_2$ the perturbation $\p_{\mathit{cool},t,n}$ is applied only once, at time $0$, and by means of the comparison and until operators on expressions we can evaluate all the distances along the considered interval between the original \traccione{} and its perturbation via $\p_{\mathit{cool},t,n}$.
Conversely, in the formula $\varphi_3$ below, the time step at which a perturbation is applied is determined by the bounded until operator:
\[
\varphi_3 = 
\funtil{\varphi_2}
{[\tau_1,\tau_2]}
{\Delta(\sx{\rho^{\mathit{fn}}}, \p_{\mathit{cool},t,n}) \le \eta_3}.
\]
The formula $\varphi_3$ is satisfied if there is a $\tilde{\tau} \in [\tau_1,\tau_2]$ such that:
\begin{inparaenum}
\item the attack on actuator $\mathit{cool}$ is detected  regardless of the time step in $[\tau_1,\tilde{\tau})$ at which $\p_{\mathit{cool},t,n}$ is applied, and 
\item the IDS is effective, up to tolerance $\eta_3$, against an application of $\p_{\mathit{cool},t,n}$ at time $\tilde{\tau}$.
\end{inparaenum}
We recall that the effectiveness of the IDS is measured in terms of the penalty function $\rho^{\mathit{fn}}$ on false negatives presented in Example~\ref{ex:penalty}.
\end{example}


\section{Statistical Evaluation of \logicName{} formulae}
\label{sec:statisticalmc}

In this section we outline the procedure, based on \emph{statistical techniques} and \emph{simulation}, that allows us to verify \logicShort{} specifications.

The proposed procedure  consists of the following basic steps:
\begin{enumerate}[(i)]
\item 
A randomised procedure that, based on simulation, permits the estimation of the \traccione{} of system $\system$, assuming an initial \datastate{} $\ds_\system$. Starting from $\ds_\system$ we have to sample $N$ sequences of \datastates{} $\ds_0^{j},\dots,\ds_k^{j}$, for $j=1,\dots,N$. All the \datastates{} collected at time $i$ are used to estimate the distribution $\ES_{\dirac_{\ds_\system}}^{i}$.
\item 
A procedure that given a \traccione{} permits to sample the effects of a perturbation. The same approach used to obtain an estimation of the \traccione{} associated with a given initial \datastate{} $\ds_\system$ can be used to obtain its perturbation.
The only difference is that while for \tracciones{} the \datastate{} $\ds_{i+1}$ at step $i+1$ only depends on the \datastate{} $\ds_i$ at step $i$, here the effects of a perturbation $\p$ are also applied. 
To guarantee statistical relevance of the collected data, for each sampled \datastate{} in the original \traccione{} we use an additional number $\ell$ of samplings to estimate the effects of $\p$ over it.
\item
A procedure that permits to estimate a distance expression between a \traccione{} and its perturbed variant. 
A distance expression $\exp$ is estimated following a syntax driven procedure.
To deal with the base case of atomic expressions $\sx{\rho}$ and $\dx{\rho}$, we rely on a mechanism to estimate the Wasserstein distance between two probability distributions on $(\D,\borel_\D)$.  
Following an approach similar to the one presented in~\cite{TK09}, to estimate the Wasserstein distance $\Wasserstein(m_{\rho,i})$ between (the unknown) $\mu$ and $\nu$ distributions on $(\D,\borel_{\D})$,
we can use $N$ independent samples $\{\ds^1_1,\ldots,\ds^N_1\}$ taken from $\mu$ and $\ell N$ independent samples $\{\ds^1_2,\ldots,\ds_2^{\ell N}\}$ taken from $\nu$. 
\item 
A procedure that checks if a given \traccione{} satisfies a formula $\varphi$. 
\end{enumerate}
These four steps are presented with more details in sections~\ref{subsec:evolseq}--\ref{subsec:checking}.

Since the procedures outlined above are based on statistical inference, we need to take into account the statistical error when checking the satisfaction of formulae.
Hence, in Section~\ref{subsec:bootstrap} we discuss a classical algorithm for the evaluation of confidence intervals in the evaluation of distances.
Then, in Section~\ref{subsec:3v}, we propose a three-valued semantics for \logicShort{} specifications, in which the truth value \emph{unknown} is added to true and false.
The three-valued semantics is generated by atomic propositions $\Delta(\esp,\p)\bowtie\eta$: if $\eta$ belongs to the confidence interval of the evaluation of $\esp$, then $\Delta(\esp,\p)\bowtie\eta$ evaluates to \emph{unknown}, since the validity of the relation $\bowtie \eta$ may depend on the particular samples obtained in the simulation.


\begin{figure}
\begin{subfigure}[t]{0.49\textwidth}
\small
\begin{algorithmic}[1]
\Function{Simulate}{$\ds_\system, N, k$} 
\State $i\gets 0$
\State $E_0 \gets  (\underbrace{\ds_\system,\ldots \ds_\system}_{N})$
\While{$i< k$}
\State $E_{i+1} \gets \emptyset$
\For{$\ds \in E_i$}
\State $E_{i+1} \gets \Call{SimStep}{\ds}, E_{i+1}$
\EndFor{}
\State $i \gets i+1$
\EndWhile
\State \Return $E_0,\ldots, E_k$
\EndFunction
\end{algorithmic}
\caption{Simulation of a \traccione{}.}
\label{fig:algo_simulate}
\end{subfigure}
\begin{subfigure}[t]{0.49\textwidth}
\small
\begin{algorithmic}[1]
\Function{SimPer}{$E_0,\ldots, E_k, \p, \tau, \ell$} 
\State $\forall i< \tau. E_i' \gets E_i$
\State $E_\tau' \gets \ell\cdot E_\tau$ 
\State $i\gets \tau$
\While{$i< k$}
\State $f \gets \mathsf{effect}(\p)$
\State $p \gets \mathsf{next}(\p)$
\For{$\ds \in E_{i}'$}
\State $\ds' \gets \Call{Sample}{f(\ds)}$
\State $E_{i+1}' \gets E_{i+1}', \Call{SimStep}{\ds'}$
\State $i \gets i+1$
\EndFor{}
\EndWhile
\State \Return $E_0',\ldots, E_k'$
\EndFunction
\end{algorithmic}
\caption{Computation of the effect of a perturbation.}
\label{fig:algo_perturbation}
\end{subfigure}
\caption{Algorithms for the simulation of \tracciones: original and perturbed.}
\end{figure}


\subsection{Statistical estimation of \tracciones{}}
\label{subsec:evolseq}

Given an initial \datastate{} $\ds_\system$ and two integers $N$ and $k$, we use a function $\Call{Simulate}{}$ (defined in Figure~\ref{fig:algo_simulate}) to obtain an \emph{empirical \traccione{}} of size $N$ and length $k$ starting from $\ds_\system$. 
Each $E_i$ in Figure~\ref{fig:algo_simulate} is a tuple $\ds_i^{1},\ldots,\ds_i^{N}$ of \datastates{} that are used to estimate the probability distribution of the \traccione{} from $\ds_\system$ at time $i$. 

The first tuple $E_0$ consists of $N$ copies of $\ds_\system$, while $E_{i+1}$ is obtained from $E_i$ by simulating one computational step from each element in $E_i$.
This step is simulated via a function $\Call{SimStep}{}$ that mimics the behaviour of function $\cstep(\ds)(\mesD)$ discussed in Section~\ref{sec:tracciones}. 
We assume that for any $\ds$ and for any measurable set $\mesD \in \borel_{\D}$ it holds that 
$
Pr\{ \Call{SimStep}{\ds} \in \mesD \}=\cstep(\ds)(\mesD)
$.
For any $i$, with $0\leq i\leq k$, we let $\hat{\ES}_{\dirac_{\ds_\system}}^{i,N}$ be the distribution such that for any measurable set $\mesD \in \borel_{\D}$ we have
$
\displaystyle{\hat{\ES}_{\dirac_{\ds_\system}}^{i,N}(\mesD)=\frac{|E_i \cap \mesD|}{N}}
$.
We can observe that, by applying the weak law of large numbers to the i.i.d.\ samples, we get that $\hat{\ES}_{\dirac_{\ds_\system}}^{i,N}$ converges weakly to $\ES^i_{\dirac_{\ds_\system}}$, namely
$
\displaystyle \lim_{N\rightarrow \infty}\hat{\ES}_{\dirac_{\ds_\system}}^{i,N} = \ES^i_{\dirac_{\ds_\system}}
$.

\begin{notation}
\label{not:E_is_traccione}
Since the estimated \traccione{} $\hat{\ES}_{\dirac_{\ds_\system}}^N$ is univocally determined by the tuple of sampled \datastates{} $E_0,\dots,E_k$ (each of size $N$) computed via the function $\Call{Simulate}{}$, we shall henceforth identify them, and refer to  $E_0,\dots,E_k$ as the estimated \traccione{} of size $N$.
\end{notation}


\subsection{Applying perturbations to \tracciones}

To compute the effect of a perturbation specified by $\p$ on an estimated \traccione{} $E_0,\dots,E_k$ of size $N$ (computed in procedure $\Call{Simulate}{}$), we use the function $\Call{SimPer}{}$ defined in Figure~\ref{fig:algo_perturbation}.

This function takes as parameters the time $\tau$ at which $\p$ is applied, and an integer $\ell$, giving the number of new samplings generated to amplify the effect of $\p$.
Given a sample set $E$, we let $\ell\cdot E$ denote the sample set obtained from $E$ by replicating each of its elements $\ell$ times.
The structure of this function is similar to that of function $\Call{Simulate}{}$.
However, while in the latter $E_{i+1}$ is obtained by applying the simulation step $\Call{SimStep}{}$, in $\Call{SimPer}{}$ the effect of a perturbation is first sampled. 
This is done by function $\Call{Sample}{f(\ds)}$, that can be defined in a standard way and that it is not reported here. 
According to semantics of perturbations given in Section~\ref{sec:perturbations}, the function $f$ used in $\Call{Sample}{f(\ds)}$ is $\mathsf{effect}(\p)$, and the perturbation used at next time step is $\mathsf{next}(\p)$.

\begin{figure}[t]
\small
\begin{algorithmic}[1]
\Function{ComputeWass}{$E_1,E_2,op,\rho$}
\State $(\ds^1_1,\ldots,\ds^N_1)\gets E_1$
\State $(\ds^1_2,\ldots,\ds_2^{\ell N})\gets E_2$
\State $\forall j: (1\leq j\leq N): \omega_j\gets\rho(\ds^j_1)$
\State $\forall h: (1\leq h\leq \ell N): \nu_h\gets\rho(\ds^h_2)$
\State re index $\{\omega_j\}$ s.t.\ $\omega_j\leq \omega_{j+1}$	
\State re index $\{\nu_h\}$ s.t.\ $\nu_h\leq \nu_{h+1}$
\If{$op=\sx{}$}
\State \Return $\frac{1}{\ell N}\displaystyle{\sum_{h=1}^{\ell N}\max\{ \nu_h - \omega_{\lceil \frac{h}{\ell}\rceil}, 0\}}$
\Else{}
\State \Return $\frac{1}{\ell N}\displaystyle{\sum_{h=1}^{\ell N}\max\{ \omega_{\lceil \frac{h}{\ell}\rceil}-\nu_h, 0\}}$
\EndIf{}
\EndFunction
\end{algorithmic}
\caption{Evaluation of the Wasserstein distance.}
\label{fig:algo_wasse}
\end{figure}


\subsection{Evaluation of distance expressions}


\paragraph{Statistical estimation of the Wasserstein metric}

Following an approach similar to the one presented in~\cite{TK09}, to estimate the Wasserstein distance $\Wasserstein(m_{\rho,i})$ between the (unknown) distributions $\mu$ and $\nu$, we can use $N$ independent samples $\{\ds^1_1,\ldots,\ds^N_1\}$ taken from $\mu$ and $\ell N$ independent samples $\{\ds^1_2,\ldots,\ds_2^{\ell N}\}$ taken from $\nu$. 
We then exploit the $i$-penalty function $\rho_i$ to map each sampled \datastate{} onto $\real$, so that, 
to evaluate the distance, it is enough to consider the reordered sequences of values 
$\{ \omega_j=\rho_i(\ds_1^j) \mid \omega_j\leq \omega_{j+1} \}$ 
and 
$\{ \nu_h=\rho_i(\ds_2^h) \mid \nu_{h}\leq \nu_{h+1} \}$.
The value $\Wasserstein(m_{\rho,i})(\nu,\mu)$ can be approximated as 
$\frac{1}{\ell N}\sum_{h=1}^{\ell N}\max\{\nu_{h} - \omega_{\lceil \frac{h}{M}\rceil},0\}$ \cite{CLT21}.

We let $\Call{ComputeWass}{}$, in Figure~\ref{fig:algo_wasse}, be the function that implements the procedure outlined above in order to estimate the distance expressions $\sx{\rho}$ or $\dx{\rho}$ between unknown distributions $\mu$ and $\nu$.
Note that, the third parameter of this function is the operator $op$ that can be either $\sx{}$ or $\dx{}$.
In the first case, the approximation of $\Wasserstein(m_{\rho,i})(\mu,\nu)$ is returned, while in the latter the one of $\Wasserstein(m_{\rho,i})(\nu, \mu)$.
We remark that the penalty function allows us to reduce the evaluation of the Wasserstein distance in $\real^n$ to its evaluation on $\real$.
Hence, due to the sorting of $\{\nu_h \mid h \in [1,\dots,\ell N]\}$ the complexity of outlined procedure is $O(\ell N \log(\ell N))$ (cf.\ \cite{TK09}).
We refer the interested reader to \cite[Corollary 3.5, Equation (3.10)]{SFGSL12} for an estimation of the approximation error given by the evaluation of the Wasserstein distance over $N,\ell N$ samples.


\paragraph{Evaluation of other expressions}

Function $\Call{EvalExpr}{}$, reported in Figure~\ref{alg:function_eval_esp}, can be used to evaluate a distance expression $\esp$ on the estimated \tracciones{} $E_0$, \ldots, $E_k$ and $E_0'$,\ldots, $E_k'$, computed by functions $\Call{Simulate}{}$ and $\Call{SimPer}{}$, respectively, at a given time $\tau$.
Function $\Call{EvalExpr}{}$ is defined recursively on the syntax of $\esp$ and follows the same scheme of Definition~\ref{def:esp_semantics}.

\begin{figure}
\small
\begin{algorithmic}[1]
\Function{EvalExpr}{$\{E_0,\ldots,E_{k}\},\{E_0',\ldots,E_{k}'\},\tau,\esp$} 
\Switch{$\exp$}
\Case{$\sx{\rho}$} 
\State \Return $\Call{ComputeWass}{E_\tau,E'_\tau,\sx{},\rho}$
\EndCase
\Case{$\dx{\rho}$} 
\State \Return $\Call{ComputeWass}{E_\tau,E'_\tau,\dx{},\rho}$
\EndCase
\Case{$\eventually{I}\esp$}
\State \Return $\min_{i\in \tau+I} \{ \Call{EvalExpr}{\{E_0,\ldots,E_{k}\},\{E_0',\ldots,E_{k}'\},i,\esp} \}$
\EndCase
\Case{$\always{I}\esp$}
\State \Return $\max_{i\in \tau+I} \{ \Call{EvalExpr}{\{E_0,\ldots,E_{k}\},\{E_0',\ldots,E_{k}'\},i,\esp} \}$
\EndCase
\Case{$\esp_1 \until{[\tau_1, \tau_2]} \esp_2$}
\State $\forall i\in [\tau+\tau_1, \tau+\tau_2]~~d^2_{i} \gets \Call{EvalExpr}{\{E_0,\ldots,E_{k}\},\{E_0',\ldots,E_{k}'\},i,\esp_2}$
\State $\forall j\in [\tau+\tau_1, \tau+\tau_2]~~d^1_{j} \gets \Call{EvalExpr}{\{E_0,\ldots,E_{k}\},\{E_0',\ldots,E_{k}'\},j,\esp_1}$
\State \Return $\min_{~\tau+\tau_1\leq i\leq \tau+\tau_2} \{ 
\max \{ d^2_{i},
\max_{~0\leq j< i} \{ d^1_{j}\}\}$
\EndCase
\Case{$\meet (\esp_1, \esp_2)$} 
\State $v_1\gets \Call{EvalExpr}{\{E_0,\ldots,E_{k}\},\{E_0',\ldots,E_{k}'\},\tau,\esp_1}$
\State $v_2\gets \Call{EvalExpr}{\{E_0,\ldots,E_{k}\},\{E_0',\ldots,E_{k}'\},\tau,\esp_2}$
\State \Return $\min\{ v_1, v_2\}$
\EndCase
\Case{$\join (\esp_1, \esp_2)$} 
\State $v_1\gets \Call{EvalExpr}{\{E_0,\ldots,E_{k}\},\{E_0',\ldots,E_{k}'\},\tau,\esp_1}$
\State $v_2\gets \Call{EvalExpr}{\{E_0,\ldots,E_{k}\},\{E_0',\ldots,E_{k}'\},\tau,\esp_2}$
\State \Return $\join\{ v_1, v_2\}$
\EndCase
\Case{$\sum_{i \in K} w_i \cdot \esp_i$}
\State $v_i\gets \Call{EvalExpr}{\{E_0,\ldots,E_{k}\},\{E_0',\ldots,E_{k}'\},\tau,\esp_i}$
\State \Return $\sum_{i\in K} w_i\cdot v_i$
\EndCase
\Case{$\sigma(\esp,\bowtie \zeta)$}
\State $v\gets \Call{EvalExpr}{\{E_0,\ldots,E_{k}\},\{E_0',\ldots,E_{k}'\},\tau,\esp}$
\If{$v\bowtie \zeta$}
\State \Return $0$
\Else{}
\State \Return $1$
\EndIf{}
\EndCase
\EndSwitch
\EndFunction
\end{algorithmic}	
\caption{Evaluation of distance expressions.}
\label{alg:function_eval_esp}
\end{figure}


\subsection{Checking formulae satisfaction}
\label{subsec:checking}

To check if a given system $\system$ satisfies or not a given formula $\varphi$, starting from the \datastate{} $\ds_\system$, function $\Call{Sat}{}$, defined in Figure~\ref{alg:computesat}, can be used. 
Together with the \datastate{} $\ds_\system$ and the formula $\varphi$, function $\Call{Sat}{}$ takes as parameters the two integers $\ell$ and $N$ identifying the number of samplings that will be used to estimate the Wasserstein metric. 
This function consists of three steps. 
First the \emph{time horizon} $k$ of the formula $\varphi$ is computed (by induction on the structure of $\varphi$) to identify the number of steps needed to evaluate the satisfaction of the formula. 
In the second step, function $\Call{Simulate}{}$ is used to simulate the \traccione{} of $\system$ from $\ds_\system$ by collecting the sets of samplings $\overline{E} = E_0,\ldots,E_k$ needed to check the satisfaction of $\varphi$, in the third step, by calling function $\Call{Eval}{}$ defined in Figure~\ref{alg:function_eval}.

The structure of $\Call{Eval}{}$ is similar to the monitoring function for STL defined in~\cite{MN04}.
Given $N$ sampled values at time $0,\ldots,k$, a formula $\varphi$ and integers $\ell$, $\tau$, function $\Call{Eval}{}$ yields a boolean value indicating if $\varphi$ is satisfied or not at time step $\tau$. 

\begin{figure}
\begin{subfigure}[t]{0.45\textwidth}
\small
\begin{algorithmic}[1]
\Function{Sat}{$\ds_\system,\varphi,\ell,N$} 
\State $k\gets \Call{Horizon}{\varphi}$
\State $\overline{E} \gets  \Call{Simulate}{\ds_\system,k,N}$
\State \Return \Call{Eval}{$\overline{E},\ell,i,\varphi$}
\EndFunction
\end{algorithmic}
\caption{Checking the satisfaction of a formula.}
\label{alg:computesat}
\end{subfigure}
\begin{subfigure}[t]{0.54\textwidth}
\small
\begin{algorithmic}[1]
\Function{Eval}{$\overline{E},\ell, \tau,\varphi$} 
\Switch{$\varphi$}
\Case{$\varphi=\top$}
\State \Return $true$
\EndCase
\Case{$\varphi=\Delta(\esp,\p) \bowtie \eta$}
\State $\overline{E'} \gets \Call{SimPer}{\overline{E},\p, \tau,\ell}$
\State $v \gets \Call{EvalExpr}{\overline{E}, \overline{E'}, \tau, \esp}$
\State \Return $v \bowtie \eta$
\EndCase
\Case{$\varphi=\neg\varphi_1$}
\State \Return $\neg \Call{Eval}{\overline{E},\ell,\tau,\varphi_1}$ 
\EndCase
\Case{$\varphi_1 \wedge \varphi_2$}
\State \Return $\Call{Eval}{\overline{E},\ell, \tau,\varphi_1}\wedge \Call{Eval}{\overline{E},\ell, \tau,\varphi_2}$
\EndCase
\Case{$\funtil{\varphi_1}{[\tau_1, \tau_2]}{\varphi_2}$}
\State $res \gets true$
\State $i\gets \tau+\tau_2$
\While{$i>0$}
\State $res \gets res\wedge \Call{Eval}{\overline{E},\ell, i,\varphi_1}$
\If{$i\in [\tau+\tau_1, \tau+\tau_2]$}
\State $res \gets res\vee \Call{Eval}{\overline{E},\ell, i,\varphi_2}$
\EndIf{}
\EndWhile{}
\State \Return $res$ 
\EndCase
\EndSwitch{}
\EndFunction
\end{algorithmic}	
\caption{Evaluation of \logicShort{} formulae.}
\label{alg:function_eval}
\end{subfigure}
\caption{Model checking.}
\end{figure}

\begin{figure*}[t]
\begin{subfigure}{.48\textwidth}
\includegraphics[scale=0.4]{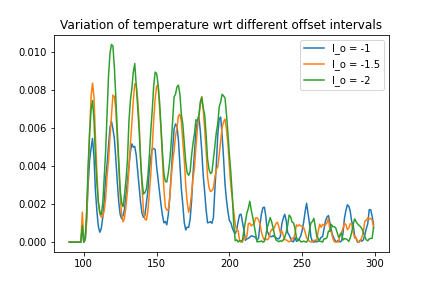}
\caption{Difference with respect to $\mathit{temp}$.}
\label{fig:distance_comparison_temp}
\end{subfigure}\hfill
\begin{subfigure}{.48\textwidth}
\includegraphics[scale=0.4]{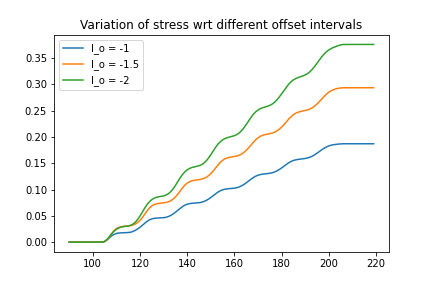}
\caption{Difference with respect to $\mathit{stress}$.}
\label{fig:distance_comparison_stress}
\end{subfigure}
\caption{Variation of the differences with respect to the values of $\mathit{temp}$ and $\mathit{stress}$, via $\p_{\mathit{temp},o,100,100}$ under different offset intervals ($l_o \in\{-2,-1.5,-1\}$).}
\label{fig:distance_comparison}
\end{figure*}

\begin{figure*}[t]
\begin{subfigure}{.48\textwidth}
\includegraphics[scale=0.4]{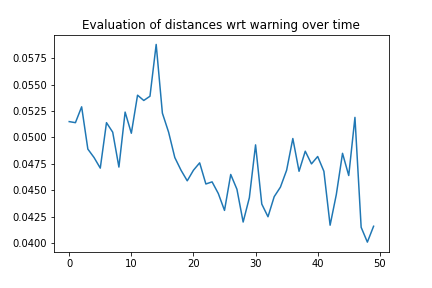}
\caption{$\esp_1$.}
\label{fig:eventually_wrn}
\end{subfigure}\hfill
\begin{subfigure}{.48\textwidth}
\includegraphics[scale=0.4]{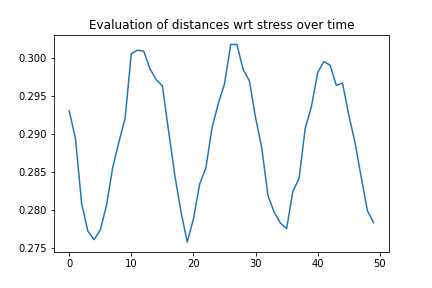}
\caption{$\esp_2$.}
\label{fig:eventually_stress}
\end{subfigure}
\caption{Variation of the evaluation of $\esp_1$ and $\esp_2$ over the time interval $[0,50]$.}
\label{fig:eventually}
\end{figure*}

\begin{example}
\label{ex:evaluations}
We give some examples of an application of our algorithms to the analysis of distances over the engine system.
We focus on the attack on the sensor $\mathit{temp}$, which is modelled by means of the perturbation $\p_{\mathit{temp},o,\tau,n}$ from Example~\ref{ex:perturbation}.
Let $\tau,n =100$, meaning that the atomic perturbation $\f_{\mathit{temp},o}$ is applied for the first time after $100$ steps from the current instant, and it is iterated for $100$ times.
For simplicity, let $0$ be the current instant.
To see the effects of $\p=\p_{\mathit{temp},o,100,100}$ over the \traccione{} $\ES$ of the engine, in Figure~\ref{fig:distance_comparison_temp} we report the pointwise evaluation of the expression $\sx{\rho}$, where $\rho(\ds) = \ds(\mathit{temp})/150$ for all $\ds \in \D$, over the time window $[90,300]$, giving thus the variation of the difference between the temperature in the perturbed system and that in the original one.
Clearly, this difference is greater in $[100,200]$, i.e., while $\f_{\mathit{temp},o}$ is active, and the smaller differences detected after $200$ steps are due to the delays induced by the perturbation in the regular behaviour of the engine.
To give a better idea of the impact of perturbations, in Figure~\ref{fig:distance_comparison_temp} we actually depicted the evaluation of these distances with respect to three variations of $\p$: we changed the left bound $l_o$ of the interval $[l_o,0]$ over which the offset $o$ is selected via a uniform distribution, $l_o \in \{-2,-1.5,-1\}$.
As one can expect, the larger the offset interval, the greater the difference.
This is even more evident in Figure~\ref{fig:distance_comparison_stress}, where we report the pointwise evaluations of the distances $\sx{\rho^{\mathit{stress}}}$ between $\ES$ and the the three perturbed \tracciones{}, for $\rho^{\mathit{stress}}$ defined in Example~\ref{ex:penalty} (results obtained with $\mathsf{stressincr} = 0.02$).

Let us now fix $l_o = -1.5$.
Consider the expressions
\[
\esp_1 = 
\always{[100,210]}\sx{\rho^{\mathit{wrn}}}
\qquad
\esp_2 =
\always{[100,210]}\sx{\rho^{\mathit{stress}}}
\]
which are instances of the expressions $\always{[\tau,k]}\sx{\rho^{\mathit{wrn}}}$ and $\always{[\tau,k]}\sx{\rho^{\mathit{stress}}}$ appearing in the formula $\varphi_1''$ in Example~\ref{ex:different_temporal}.
In Figure~\ref{fig:eventually} we report the variation of the evaluation of the two expressions over $\ES$ and its 51 perturbations via $\p$, each obtained by applying $\p$ at ad different instant $\tau' \in [0,50]$. 
For each step $\tau' \in [0,50]$, the interval over which the maxima of atomic expression $\sx{\rho^{\mathit{wrn}}}$ and $\sx{\rho^{\mathit{stress}}}$ are evaluated is $[100,210] + \tau'$.
Specifically, in the plots, we associate the coordinate $x=\tau'$ with the value $\sat{\esp_1}{\ES,\ES_{\mid_{\psem{\p},\tau'}}}{\tau'}$ in Figure~\ref{fig:eventually_wrn}, and with the value $\sat{\esp_2}{\ES,\ES_{\mid_{\psem{\p},\tau'}}}{\tau'}$ in Figure~\ref{fig:eventually_stress}.
The two plots show that by applying the perturbation at different time steps, we can get different effects on systems behaviour, with variations of the order of $10^{-3}$ in the case of warnings, and of the order of $10^{-2}$ in the case of stress.

We run several experiments in order to understand for which stealthiness threshold $\eta_3$ and danger threshold $\eta_4$ the formula $\varphi_1$ in Example~\ref{ex:different_temporal} expressing that the attack on the insecure channel $\mathit{ch\_temp}$ is successful is satisfied.
We concluded that for $\eta_3$ at least $0.06$ and $\eta_4$ at most $0.4$ the attack is successful (see Example~\ref{ex:three_valued} below for a further discussion on the tuning of $\eta_3$ and $\eta_4$).
\end{example}


\subsection{Statistical error}
\label{subsec:bootstrap}

In this section we discuss the evaluation of the statistical error arising
in the estimation of distances between a real \traccione{} and its perturbation, via the application function $\Call{EvalExpr}{}$.
More precisely, given a distance expression $\exp$, a real \traccione{} $\ES$, a perturbation $\p$, and two time instants $\tau$ and $\tau'$, our aim is to provide a procedure for the evaluation of a confidence interval $\cinterval$ such that the probability that the real value $\sat{\esp}{\ES,\ES_{\mid_{\psem{\p},\tau}}}{\tau'}$ of the distance is in $\cinterval$ is at least $\alpha$, for a desired coverage probability $\alpha$, i.e.,
$
Pr \{\sat{\esp}{\ES,\ES_{\mid_{\psem{\p},\tau}}}{\tau'} \in \cinterval\} \ge \alpha
$.

Notice that the statistical errors arising from the estimation of the real distributions in the \tracciones{} $\ES$, and $\ES_{\mid_{\psem{\p},\tau}}$, through their simulations via functions $\Call{Simulate}{}$, and $\Call{SimPer}{}$, are subsumed in the approximation errors on the evaluation of the distances.

We start from the evaluation of the confidence intervals on Wasserstein distances, i.e., a confidence interval $\cinterval$ such that
$
Pr \{\Wasserstein(m_{\rho})(\ES_{1}^{\tau},\ES_{2}^{\tau}) \in \cinterval\} \ge \alpha
$,
where $\ES_{1}^{\tau},\ES_{2}^{\tau}$ are the real distributions reached at time $\tau$ in the \tracciones, and $\alpha$ is the desired coverage probability.

As $\ES_1^{\tau}$ and $\ES_2^{\tau}$ are unknown, and, thus, so is the real value of the Wasserstein distance, to obtain an estimation of $\cinterval$ we apply the \emph{normal-theory intervals} (or \emph{empirical bootstrap}) method \cite{Ef79,Ef81}:
\begin{enumerate}
\item Generate $m$ bootstrap samples for each distribution: these are obtained by drawing with replacement a sample of size $N$ from the elements of the original sampling of $\mu$, and one of size $\ell N$ from those for $\nu$.
Let $\mu_1,\dots,\mu_m$ and $\nu_1,\dots,\nu_m$ the obtained bootstrap samples.
\item Apply the procedure $\Call{ComputeWass}{}$ $m$-times to evaluate the Wasserstein distances between the bootstrap distributions.
Let $W_1,\dots,W_m$ be the resulting bootstrap distances.
\item Evaluate the mean of the bootstrap distance
$ \displaystyle
\overline{W} = \frac{\sum_{i=1}^m W_i}{m}.
$
\item Evaluate the bootstrap estimated standard error
$ \displaystyle
SE_W = \sqrt{\frac{\sum_{i=1}^m (W_i - \overline{W})^2}{m-1}}.
$
\item Let
$ \displaystyle
\cinterval = \overline{W} \pm z_{1-\alpha/2} SE_W
$,
where $z_{1-\alpha/2}$ is the $1-\alpha/2$ quantile of the standard normal distribution.
\end{enumerate}

\begin{remark}
In \cite{TK08} a similar procedure is proposed, but there the bootstrap percentile interval method is used.
We chose to use the empirical bootstrap method to find a balance between accuracy and computational complexity.
In fact, it is known that empirical bootstraps are subject to bias in the samples, and some more accurate techniques, like the \emph{bias-corrected, accelerated percentile intervals} ($BC_a$), have been proposed \cite{Ciccio96}.
However, in order to reach the desired accuracy with the $BC_a$ method, it is necessary to use a number of bootstrap sampling $m \ge O(1000)$.
This means that in order to obtain an estimation of the confidence interval of a \emph{single evaluation} of a Wasserstein distance, we need to evaluate it at least $1000$ times \cite{FW18}.
Given the cost of a single evaluation, and considering that in our formulae this distance is evaluated thousands of times, this approach would be computationally unfeasible.
In our examples, a number of bootstrap samplings $m \le 100$ is sufficient to obtain reasonable confidence intervals (the width of our $95\%$ intervals is $O(10^{-3})$).
\end{remark}

The evaluation of the confidence interval for the computation of the Wasserstein distance is then extended to distance expressions: once we have determined the bounds of the confidence intervals of the sub-expressions occurring in $\esp$, the $\cinterval$ of $\esp$, denoted by $\cinterval_\esp$, is obtained by applying the function defining the evaluation of $\esp$ to them.
For instance, if $\esp = \join(\esp_1,\esp_2)$, $\cinterval_{\esp_1} = (l_1,r_1)$, and $\cinterval_{\esp_2} = (l_2,r_2)$, then $\cinterval_{\esp} = (\max\{l_1,l_2\}, \max\{r_1,r_2\})$.

\begin{example}
\label{ex:bootstrapping}
In Figure~\ref{fig:ex_ci} we report the $95\%$ confidence intervals for $\sat{\esp_1}{\ES,\ES_{\mid_{\psem{\p},0}}}{\tau'}$, where $\tau' \in [0,50]$, with $\esp_1$ and $\p$ as in Example~\ref{ex:evaluations}.
We remark that here the perturbation $\p$ is applied only once, at time $0$ (in fact we consider the \traccione{} $\ES_{\mid_{\psem{\p},0}}$).  
The intervals in Figure~\ref{fig:ex_ci_50} have been obtained by means of $m=50$ bootstrap samplings, whereas for those in Figure~\ref{fig:ex_ci_100} we used $m=100$.
In the former case, the maximal width of the interval is $9.57\cdot 10^{-3}$, with an average width of $8.03 \cdot 10^{-3}$; in the latter case, those number become, respectively, $9.48 \cdot 10^{-3}$ and $8.38 \cdot 10^{-3}$.
As the widths of the intervals are of the same order, we can limit ourselves to use $m=50$ in the experiments, thus lowering the computation time without loosing information.
\end{example}

\begin{figure}
\begin{subfigure}{.48\textwidth}
\includegraphics[scale=0.35]{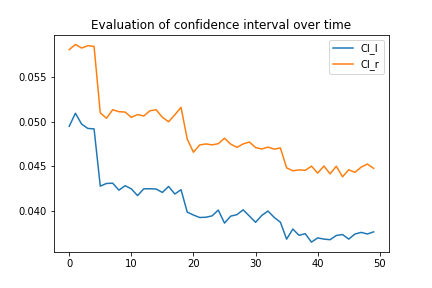}
\caption{$m=50$}
\label{fig:ex_ci_50}
\end{subfigure}
\begin{subfigure}{.48\textwidth}
\includegraphics[scale=0.35]{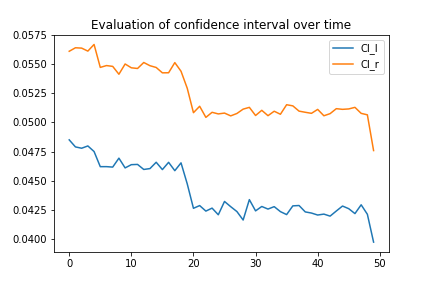}
\caption{$m=100$}
\label{fig:ex_ci_100}
\end{subfigure}
\caption{Confidence intervals of $\esp_1$, for $\alpha=0.05$, evaluated over the time interval $[0,50]$.}
\label{fig:ex_ci}
\end{figure}


\subsection{A three-valued semantics for \logicShort}
\label{subsec:3v}

The presence of errors in the evaluation of expressions due to statistical approximations has to be taken into account also when checking the satisfaction of \logicShort{} formulae.
Specifically, our model checking algorithm will assign a three-valued semantics to formulae by adding the truth value \emph{unknown} ($\Cup$) to the classic true ($\top$) and false ($\bot$).
Intuitively, unknown is generated by the comparison between the distance and the chosen threshold in atomic propositions: if the threshold $\eta$ does not lie in the confidence interval of the evaluation of the distance, then the formula will evaluate to $\top$ or $\bot$ according to whether the relation $\bowtie \eta$ holds or not.
Conversely, if $\eta$ belongs to the confidence interval, then the atomic proposition evaluates to $\Cup$, since the validity of the relation $\bowtie \eta$ may depend on the particular samples obtained in the simulation.

Starting from atomic propositions, the three-valued semantics is extended to the Boolean operators via truth tables in the standard way \cite{Kl52}.
Then, we assign a three-valued semantics to \logicShort{} formulae via the satisfaction function $\Omega_\ES \colon \logicSymbol \times [0,\h] \to \{\top,\Cup,\bot\}$, defined for all \tracciones{} $\ES$ as follows:
\begin{align*}
\Omega_\ES(\top,\tau) ={} &
\top
\\
\Omega_\ES(\Delta(\esp,\p) \bowtie \eta,\tau) ={} &
\begin{cases}
\Cup & \text{ if } \eta \in \cinterval_{\esp}
\\
\models(\ES,\tau,\Delta(\esp,\p) \bowtie \eta) & \text{ otherwise.}
\end{cases}
\\
\Omega_\ES(\neg\varphi,\tau) ={} & 
\neg \Omega_\ES(\varphi,\tau) 
\\
\Omega_\ES(\varphi_1 \wedge \varphi_2,\tau) ={} &
\Omega_\ES(\varphi_1,\tau) \wedge \Omega_\ES(\varphi_2,\tau)
\\
\Omega_\ES(\funtil{\varphi_1}{I}{\varphi_2},\tau) ={} &
\bigvee_{\tau' \in I} \left(\Omega_\ES(\varphi_2,\tau') 
\wedge
\bigwedge_{\tau'' \in I, \tau'' < \tau'} \Omega_{\ES}(\varphi_1,\tau'')\right).
\end{align*}

\begin{example}
\label{ex:three_valued}
Consider the formula $\varphi_{\eta_3}= \Delta(\esp_1,\p) \le \eta_3$ for $\esp_1$ and $\p$ as in Example~\ref{ex:evaluations}.
In Figure~\ref{fig:ex_three_valued} we report the variation of the evaluation of $\Omega_{\ES}(\varphi_{\eta_3},\tau')$ with respect to $\tau' \in [0,50]$ and $\eta_3 \in \{0.03,0.04,0.06\}$, where we let $\top \mapsto 1$, $\Cup \mapsto 0$, and $\bot \mapsto -1$.
The plot confirms the validity of the empirical tuning of parameter $\eta_3$ that we carried out in Example~\ref{ex:evaluations}.
A similar analysis (non reported here) has been conducted for $\eta_4$.
\end{example}

\begin{figure}
\includegraphics[scale=0.4]{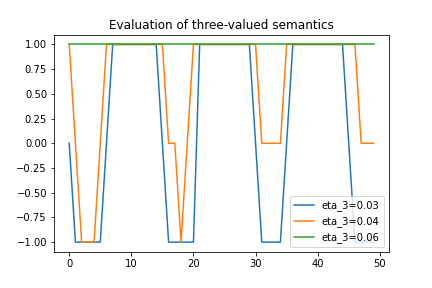}
\caption{Three-valued evaluation of the formula $\varphi_{\eta_3}$ over the time interval $[0,50]$ for $\eta_3 = 0.03, 0.04, 0.06$.}
\label{fig:ex_three_valued}
\end{figure}


\section{Concluding remarks}
\label{sec:conclusion}

We have introduced the \logicName{} (\logicShort), a temporal logic allowing for the specification and analysis of properties of distances between the behaviours of CPSs over a finite time horizon.
Specifically, we have argued that the unique features of \logicShort{} make it suitable for the verification of robustness properties of CPSs against perturbations.
Moreover, it also allows us to capture properties of the probabilistic transient behaviour of systems.

As briefly discussed in the introduction, the term robustness is used in several contexts, from control theory \cite{ZD98} to biology \cite{Ki07}, and not always with the same meaning.
Since our objective was not to introduce a notion of robustness, but to provide a formal tool for the verification of general robustness properties, we limit ourselves to recall that, in the context of CPSs, we can distinguish five categories of robustness \cite{FKP16}:
\begin{inparaenum}[(i)]
\item input/output robustness;
\item robustness with respect to system parameters;
\item robustness in real-time system implementation;
\item \label{unpred-env} robustness due to unpredictable environment;
\item robustness to faults.
\end{inparaenum}
Our framework is designed for properties falling into the (\ref{unpred-env}) category.
An interesting avenue for future research is to check whether robustness properties from the other categories can be specified using our framework.

Up to our knowledge, \logicShort{} is the only existing temporal logic expressing properties of distances between systems behaviours.
Usually, even in logics equipped with a real-valued semantics (which, in an unfortunate twist of fate, is also known as robustness semantics), the behaviour of a given system is compared to the desired property.
Moreover, specifications are tested only over a single trajectory of the system at a time.
Here we are interested in studying how the distance between two systems evolves in time, and to do that we always take into account the overall behaviour of the system, i.e., \emph{all possible trajectories}.
This feature also distinguishes our approach to robustness from classical ones, like those in~\cite{FP09,DM10}.
Our properties are based on the comparison of the \tracciones{} of two different systems, whereas \cite{FP09,DM10} compare a single behaviour of a single system with the set of the behaviours that satisfy a given property, which is specified by means of a formula expressed in a suitable temporal logic.

Recently, \cite{WRWVD19} proposed a statistical model checking algorithm based on stratified sampling for the verification of PCTL specification over Markov chains.
Informally, stratified sampling allows for the generation of negatively correlated samples, i.e., samples whose covariance is negative, thus considerably reducing the number of samples needed to obtain confident results from the algorithm.
However, the proposed algorithm works under a number of assumptions restricting the form of the PCTL formulae to be checked.
While direct comparison of the two algorithms would not be feasible, nor meaningful given the disparity in the classes of formulae, it would be worth studying the use of stratified sampling in our model checking algorithm.

As another direction for future work, we plan to develop a predictive model for the runtime monitoring of \logicShort{} specifications.
In particular, inspired by \cite{PPZGSS18,BCPSS19} where deep neural networks are used as reachability predictors for predictive monitoring, we intend to integrate our work with learning techniques, to favour the computation and evaluation of the predictions.

We also plan to apply our framework to the analysis of biological systems.
Some quantitative extensions of temporal logics have already been proposed in that setting (e.g. \cite{FR08,RBFS09,RBFS11}) to capture the notion of robustness from \cite{Ki07} or similar proposals \cite{NGM18}.
It would be interesting to see whether the use of \logicShort{} and \tracciones{} can lead to new results.

Finally, we will investigate the application of our framework to Medical CPSs.
In this context, statistical inference and learning methods have been combined in the synthesis of controllers, in order to deal with uncertainties (see, e.g., \cite{PLCSL20}).
The idea is then to use our tool to test the obtained controllers and verify their robustness against uncertainties.

\bibliographystyle{ACM-Reference-Format}
\bibliography{RobTL}

\end{document}